\newtheorem{theorem}{Theorem}
\newtheorem{observation}[theorem]{Observation}
\newtheorem{lemma}[theorem]{Lemma}
\newtheorem{corollary}[theorem]{Corollary}
\newtheorem{proposition}[theorem]{Proposition}
\title{Knowledge Compilation, Width and Quantification \thanks{This work was partially supported by the French Agence Nationale de la Recherche, AGGREG project reference ANR-14-CE25-0017-01.}}
\author{Florent Capelli\thanks{florent.capelli@univ-lille.fr, Université de Lille, CRIStAL, CNRS/Inria} \and Stefan Mengel\thanks{mengel@cril.fr, CNRS, CRIL UMR 8188}}
\newcommand{\var}{\mathsf{var}}
\newcommand{\forgot}{\mathsf{forgot}}
\newcommand{\kept}{\mathsf{kept}}
\newcommand{\shape}{\mathsf{Shape}}
\newcommand{\join}{\bowtie}
\newcommand{\calC}{\mathcal{C}}
\begin{document}

\maketitle

\begin{abstract}
  We generalize many results concerning the tractability of SAT and \#SAT on bounded treewidth CNF-formula in the context of Quantified Boolean Formulas (QBF). To this end, we start by studying the notion of width for OBDD and observe that the blow up in size while  existentially or universally projecting a block of variables in an OBDD only affects its width. We then generalize this notion of width to the more general representation of structured (deterministic) DNNF and give a similar algorithm to existentially or universally project a block of variables. Using a well-known algorithm transforming bounded treewidth CNF formula into deterministic DNNF, we are able to generalize this connection to quantified CNF which gives us as a byproduct that one can count the number of models of a bounded treewidth and bounded quantifier alternation quantified CNF in FPT time.  We also give an extensive study of bounded width d-DNNF and proves the optimality of several of our results.
\end{abstract}

\section{Introduction}

It is well known that restricting the interaction between variables
and clauses in CNF-formulas makes several hard problems on them tractable. For
example, the propositional satisfiability problem SAT and its counting version 
\#SAT can be solved in time $2^{O(k)}|F|$ when $F$ is a
CNF formula whose primal graph is of treewidth
$k$~\cite{Szeider04,SamerS10}. Many extensions of this result have
been shown these last ten years for more general graph
measures such as modular treewidth or cliquewidth~\cite{PaulusmaSlivovskySzeider16,SlivovskyS13,SaetherTV14}. In~\cite{BovaCMS15},
Bova et al.~recently explained these results using Knowledge
Compilation, a subarea of artificial intelligence that systematically studies and compares the properties of different representations for knowledge: many classes of structured CNF can be represented by
small Boolean circuits known as structured deterministic DNNF~\cite{PipatsrisawatD08}. 
Such circuits have strong restrictions making several
problems such as satisfiability and model counting on them tractable.

In this paper, we show how these circuit representations can be
used in the context of quantification. To this end, we give a simple
algorithm that, given a structured d-DNNF $D$ and a subset $Z$ of variables,
outputs a structured d-DNNF $D'$ computing $\exists Z\;D$. We show a similar
result to construct a structured d-DNNF $D'$ computing $\neg \exists Z\;D$. In
general, the size of $D'$ blows up exponentially during our
transformation and this is unavoidable since there are strong
exponential lower bounds in the setting~\cite{PipatsrisawatD08}. But here we define a notion of width for complete structured d-DNNF that generalizes more well-known notions like width of OBDD or SDD and show that the
exponential blowup in fact depends only on the \emph{width} of the
input circuit and not on the size. Since many structured CNF-formulas,
such as those of bounded treewidth, can be translated into complete 
structured d-DNNF of bounded width, we are able to construct structured d-DNNF for the quantified formula where the blowup is
relatively tame in our setting which yields fixed-parameter tractable
algorithms for several problems. Figure~\ref{fig:scheme} depicts the
overall scheme that we use to construct such algorithms.

% Define block styles
\tikzstyle{block} = [rectangle, draw, fill=gray!20, 
     text width=11em, minimum height=4em]
\tikzstyle{line} = [draw, -latex']

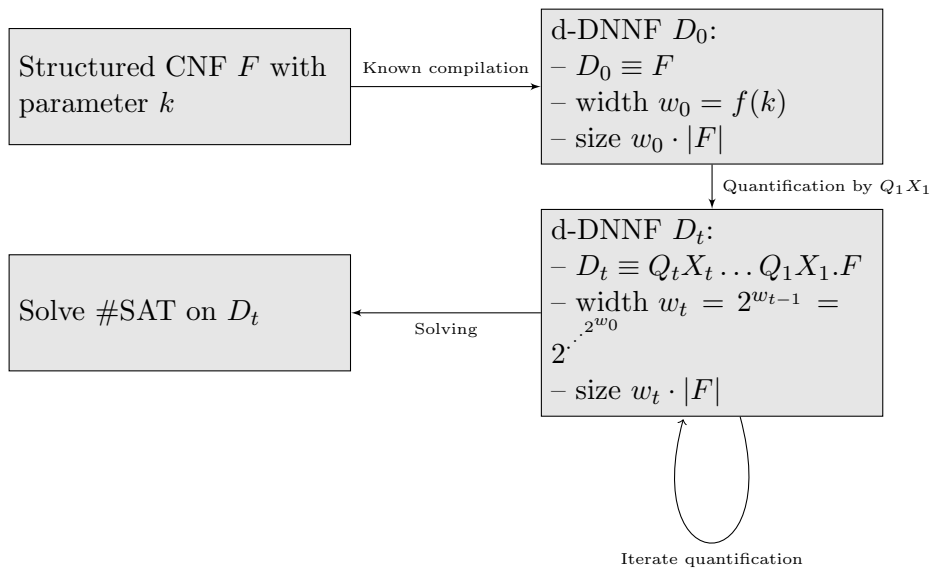
\begin{figure*}
  \centering
  \begin{tikzpicture}[node distance = 7cm, auto, scale=0.5]
    % Place nodes
    \node [block] (init) {Structured CNF $F$ with parameter $k$};
    \node [block, right of=init] (compile) {
      d-DNNF $D_0$:\\
      -- $D_0 \equiv F$ \\
      -- width $w_0 = f(k)$ \\
      -- size $w_0 \cdot |F|$ 
    };
    \node [block, below of=compile, node distance = 3cm] (quantify) {
      d-DNNF $D_t$:    \\
      -- $D_t \equiv Q_t X_t \dots Q_1 X_1.F$\\
      -- width $w_t = 2^{w_{t-1}} = 2^{\cdot^{\cdot^{\cdot^{2^{w_0}}}}}$        \\
      -- size $w_t \cdot |F|$   
    };

    \node [block, left of=quantify] (final) {
      Solve \#SAT on $D_t$
      };
    % % Draw edges
      \path [line] (init) -- node  {\tiny Known compilation} (compile);
      \path [line] (compile) -- node {\tiny Quantification by $Q_1X_1$}(quantify);
      \path [line] (quantify) edge [loop below]  node {\tiny Iterate quantification} (quantify);
      \path [line] (quantify) -- node {\tiny Solving} (final);
    % \path [line] (identify) -- (evaluate);
    % \path [line] (evaluate) -- (decide);
    % \path [line] (decide) -| node [near start] {yes} (update);
    % \path [line] (update) |- (identify);
    % \path [line] (decide) -- node {no}(stop);
    % \path [line,dashed] (expert) -- (init);
    % \path [line,dashed] (system) -- (init);
    % \path [line,dashed] (system) |- (evaluate);
\end{tikzpicture}

\caption{The overall scheme for proving tractability results on structured quantified CNF.}
\label{fig:scheme}
\end{figure*}
For instance, our algorithm can be used to show that the number of
models of a partially quantified CNF-formula $F$ of treewidth $k$ with
$t$ blocks of quantifiers can be computed in time
$2^{\cdot^{\cdot^{\cdot^{2^{O(k)}}}}} |F|$ with $t+1$
exponentiations. This generalizes a result by Chen~\cite{Chen04} where
the fixed-parameter tractability of QBF on such formulas was shown
with a comparable complexity. Moreover, it generalizes a very recent result of~\cite{FichteMHW18} on projected model counting, i.e., model counting in the presence of a single existential variable block. Finally, our algorithm also applies to the more general notions of incidence treewidth and signed cliquewidth.

We complement our algorithm with lower bounds that show that our construction 
is essentially optimal in several respects.

The paper is organized as follows: Section~\ref{sec:preliminaries} introduces the necessary preliminaries. Section~\ref{sec:warmup} is dedicated to show how quantification can be efficiently done on small width complete OBDD. The aim of this section is to present our results in a simpler framework. Section~\ref{sec:ddnnf} generalizes the result of Section~\ref{sec:warmup} to the more powerful representation of bounded width d-DNNF. The rest of the paper is dedicated to corollaries of this result proven in Section~\ref{sec:ddnnf} and explores the limits and optimality of our approach. Section~\ref{sec:graphs} is dedicated to prove parametrized tractability results for QBF when the graph of the input CNF is restricted. Section~\ref{sec:transformations} gives a systematic study of the tractable transformations of bounded width d-DNNF, in the spirit of~\cite{DarwicheM02}. Finally, Section~\ref{sec:lower-bounds} contains several results showing that our definition of bounded width DNNF cannot be straightforwardly weakened while still supporting efficient quantification.

\section{Preliminaries}
\label{sec:preliminaries}

By $\exp^\ell(p)$ we denote the iterated exponentiation function that is defined by $\exp^0(p):= p$ and $\exp^{\ell+1}(p) := 2^{\exp^\ell(p)}$.

\paragraph{CNF and QBF.} We assume that the reader is familiar with the basics of Boolean logic and fix some notation. For a Boolean function $F$ and a partial assignment $\tau$ to the variables of $F$, denote by $F[\tau]$ the function we get from $F$ by fixing the variables of $\tau$ according to $\tau$.
% \stefan{maybe put this somewhere else later, since this might only be needed in the appendix.}
A \emph{literal} is a Boolean variable or its negation. A \emph{clause} is a disjunction of literals and finally a \emph{formula in negation normal form} (short \emph{CNF formula}) is a conjunction of clauses. We define the size $C$ of a clause $C$ as the number of literals appearing in it. The size $|F|$ of a formula $F$ is then defined as $\sum_C|C|$ where the sum is over the clauses in $F$.

A \emph{ Quantified Boolean Formula} (short \emph{QBF}) $F = Q_1 X_1 Q_2 X_2 \dots Q_{\ell} X_\ell~F'$ is a CNF formula $F'$ together with a \emph{quantified prefix} $Q_1 X_1 Q_2 X_2 \dots \exists X_\ell$ where $X_1, \dots, X_\ell$ are disjoint subsets of variables of $F'$, $Q_i$ is either $\exists$ or $\forall$ and $Q_{i+1} \neq Q_i$. The number of blocks $\ell$ is called the \emph{quantifier alternation}. W.l.o.g, we always assume that $Q_\ell$, the most nested quantifier, is always an $\exists$-quantifier. The \emph{quantified variables} of $F$ are defined as $\bigcup_{i=1}^\ell X_i$ and the \emph{free variables} of $F$ are the variables of $F$ that are not quantified. A quantified CNF naturally induces a Boolean function on its free variables. 

\paragraph{Representations of Boolean functions.} We present several representations studied in the area of knowledge compilation in a rather succinct fashion. For more details and discussion, the interested reader is refered to~\cite{DarwicheM02,PipatsrisawatD08}.

A Boolean circuit $C$ is defined to be in \emph{negation normal form} (short an NNF) if $\neg$-gates appear in it only directly above the inputs. An $\land$-gate in an NNF is called \emph{decomposable} if, for its inputs $g_1, g_2$ the subcircuits rooted in $g_1$ and $g_2$ are on disjoint variable sets. A circuit in \emph{decomposable negation normal form} (short a DNNF) is an NNF in which all gates are decomposable~\cite{Darwiche01}. An $\lor$-gate $g$ in an NNF is called \emph{deterministic} if there is no assignment to the variables of the circuit that makes two children of $g$ true. A DNNF is said to be  \emph{deterministic} (short a \emph{d-DNNF}) if all its $\lor$-gates are deterministic.

A \emph{binary decision diagram} (short \emph{BDD}) is a directed acycliyc graph with the following properties: there is one source and two sinks, one of each labeled with $0$ and $1$. The non-sink nodes are labeled with Boolean variables and have two outgoing edges each, one labeled with $0$ the other with $1$. A BDD $B$ computes a function as follows: for every assignment $a$ to the variables of $B$, one constructs a source-sink path by starting in the source and in every node labeled with a Boolean variable $X$ following the edge labeled with $a(X)$. The label of the sink reached this way is then the value computed by $B$ on $a$.

A BDD is called a \emph{free} BDD (short \emph{FBDD}) if on every source-sink path every variable appears at most once. If on every path the variables are seen in a fixed order $\pi$, then the FBDD is called an ordered BDD (short \emph{OBDD}).

An FBDD is called \emph{complete} if on every source-sink path every variable appears exactly once. This notion also applies to OBDDs in the obvious way.
A \emph{layer} of a variable $X$ in a complete OBDD $B$ is the set of all nodes labeled with $X$. The \emph{width} of $B$ is the maximum size of its layers.
Note that for every OBDD one can construct a complete OBDD computing the same function in polynomial time, but it is known that it is in general unavoidable to increase the number of nodes labeled by a variable by a factor linear in the number of variables~\cite{BolligW00}.

\paragraph{Graphs of CNF formulas.}
There are two graphs commonly assigned to CNF formulas: the \emph{primal graph} of a CNF formula $F$ is the graph that has as its vertices the variables of $F$ and there is an edge between two vertices $x,y$ if and only if there is clause in $F$ that contains both $x$ and $y$. The \emph{incidence graph} of $F$ has as vertices the variables and the clauses of $F$ and there is an edge between two nodes $x$ and $C$ if and only if $x$ is a variable, $C$ is a clause, and $x$ appears in $C$.

We will consider several width measures on graphs like treewidth and pathwidth. Since we do not actually need the definitions of these measures but only depend on known results on them, we spare the readers these rather technical definitions and give pointers to the literature in the respective places.

% A tree decomposition $(T, (B_t)_{t\in T})$ of a graph $G$ consists of a tree $T$ and a set of \emph{bags} $B_t\subseteq V(G)$ such that for every node $t$ of the tree there is exactly one bag and the following properties hold: (i) for every edge $e\in E(G)$, there is a node $t$ of $T$ such that $e\subseteq B_t$; (ii) for every vertex $v\in V(G)$, the set $\{t\in V(T)\mid v\in B_t\}$ induces a subtree of $T$. The \emph{width} of a decomposition is $\max\{|B_t| -1\mid t\in V(T)\}$ and the \emph{treewidth} of $G$ is the smallest width of a tree decomposition of $G$.

% The \emph{primal treewidth} of a CNF formula is the treewidth of its primal graph, the \emph{incidence treewidth} that of its incidence graph.

\section{Warm-up: Quantification on OBDD}
\label{sec:warmup}

In this section, we will illustrate the main ideas of our approach on the simpler case of OBDD. To this end, fix an OBDD $G$ in variables $X_1, \ldots, X_n$ in that order. Now let $Z$ be a set of variables. We want to compute an OBDD that encodes $\exists Z\, G$, i.e., we want to forget the variables in $Z$.

Note that it is well-known that OBDDs do not allow arbitrary forgetting of variables without an exponential blow-up, see~\cite{DarwicheM02}. Here we make the observation that this exponential blow-up is in fact not in the \emph{size} of the considered OBDD but in the \emph{width} which for many interesting cases is far lower.

\begin{lemma}\label{lem:OBDD}
 Let $G$ be a complete OBDD of width $w$ and $Z$ be a set of its variables. Then there is an OBDD is width $2^w$ that computes the function of $\exists Z\; G$.%\stefan{is this linear time? Probably, but I want to think about it later on.}
\end{lemma}
\begin{proof}
 The technique is essentially the power set construction used in the determinization of finite automata. Let $V_x$ for a variable $x$ denote the set of nodes labeled by~$x$. For every $x$ not in $Z$, our new OBDD $G'$ will have a node $N_{S,x}$ labeled by $x$ for every subset $S\subseteq V_x$. The invariant during the construction will be that a partial assignment $a$ to the variables in $\var(G)\setminus Z$ that come before $x$ in $G$ leads to $N_{S,x}$ if and only if $S$ is the set of nodes in $V_x$ which can be reached from the source by an extension of $a$ on the variables of $Z$. We make the same construction for the $0$- and $1$-sink of $G$: $G'$ gets three sinks $0$, $1$ and $01$ which encode which sinks of $G$ can be reached with extensions of an assignment $a$. Note that if we can construct such a $G'$, we are done by merging the sinks $1$ and $01$.
 
 The construction of $G'$ is fairly straightforward: for every variable $x$ not in $Z$, for every node $N\in V_x$, we compute the set of nodes $N^+$ labeled with the next variable $x'$ not in $Z$ that we can reach by following the $1$-edge of $N$ and the set of nodes $N^-$ we can reach by following the $0$-edge of $N$. Then, for every $S\subseteq V_x$ we define the $1$-successor of $N_{S,x}$ as $N_{S', x'}$ where $S'= \bigcup_{N\in S}N^+$. The $0$-successors are defined analogously.
\end{proof}

We remark that in~\cite{FerraraPV05} a related result is shown: for a CNF-formula~$F$ of pathwidth $k$ and every subset $Z$ of variables, one can construct an OBDD of size $2^{2^k}|F|$ computing $\exists Z\; F$. This result follows easily from Lemma~\ref{lem:OBDD} by noting that for a CNF $F$ of pathwidth $k$ one can construct a complete OBDD of width $2^p$. We note that our approach is more flexible than the result in~\cite{FerraraPV05} because we can iteratively add more quantifier blocks since $\forall Z \; D \equiv \neg (\exists Z \neg D)$ and negation in OBDD can easily easily performed without size increase. For example, one easily gets the following corollary.

\begin{corollary}\label{cor:weakHubie}
 There is an algorithm that, given a QBF restricted to $\ell$ quantifier alternations and of pathwidth $k$, decides if $F$ is true in time $O(\exp^\ell(p)|F|)$.
\end{corollary}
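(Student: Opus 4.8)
The plan is to evaluate the prefix of $F = Q_1 X_1 \cdots Q_\ell X_\ell\,F'$ from the innermost block outward, keeping at every stage a \emph{complete} OBDD for the partially quantified matrix and controlling the blow-up through its width alone, via Lemma~\ref{lem:OBDD}. The one twist that keeps the tower at height exactly $\ell$, rather than $\ell+1$, is to treat the outermost block not by a further quantification but by a single linear-time satisfiability/validity test.

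First I would compile $F'$: since $F'$ has pathwidth $p$, the fact recalled just before the corollary yields a complete OBDD $D_0 \equiv F'$ of width $2^p$, hence of size $O(2^p|F|)$, and I fix its variable order once and for all so that every OBDD built afterwards is ordered with respect to the order induced on the surviving variables. I then eliminate the blocks $X_\ell, X_{\ell-1}, \dots, X_2$ one at a time, obtaining OBDDs $D_i \equiv Q_{\ell-i+1}X_{\ell-i+1}\cdots Q_\ell X_\ell\,F'$. When the block to be removed is existential I apply Lemma~\ref{lem:OBDD} with $Z$ equal to that block directly; when it is universal I use $\forall Z\,D \equiv \neg(\exists Z\,\neg D)$, performing each negation by swapping the $0$- and $1$-sinks, which changes neither the width nor the order and keeps the diagram complete. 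Each step sends a width $w$ to $2^w$, so with $w_0 = 2^p$ and $w_i = 2^{w_{i-1}}$ the OBDD $D_{\ell-1} \equiv Q_2 X_2 \cdots Q_\ell X_\ell\,F'$, whose only variables are those of $X_1$, has width $\exp^\ell(p)$ and size $O(\exp^\ell(p)|F|)$.

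The step that genuinely has to be verified — and the reason the iteration is legitimate — is that the OBDD returned by Lemma~\ref{lem:OBDD} is again complete and ordered consistently with its input. Inspecting the construction, it places exactly one node $N_{S,x}$ on each source--sink path for every surviving variable $x$ and preserves their relative order, so each $D_i$ meets the hypotheses of Lemma~\ref{lem:OBDD} and the lemma may be reapplied; this is exactly the flexibility that the purely pathwidth-based argument of~\cite{FerraraPV05} does not offer. No other step is delicate: the sink-swap negations are free, and the widths simply compound into the stated tower.

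Finally I would decide the outermost block directly on $D_{\ell-1}$ rather than quantifying it, which is what avoids paying one more exponentiation. Since $F$ is closed, $Q_1 X_1\,D_{\ell-1}$ is true iff $D_{\ell-1}$ is satisfiable when $Q_1 = \exists$ and iff $D_{\ell-1}$ is valid when $Q_1 = \forall$; the former is reachability of the $1$-sink from the source, the latter unreachability of the $0$-sink, each decidable by a single linear scan of $D_{\ell-1}$. The running time is therefore dominated by the construction of $D_{\ell-1}$, which gives the claimed bound $O(\exp^\ell(p)|F|)$.
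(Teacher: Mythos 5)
Your proof is correct and follows the route the paper intends (the paper itself only sketches this corollary): compile the matrix into a complete OBDD of width $2^p$, iterate Lemma~\ref{lem:OBDD} with sink-swapping negations for the universal blocks, and check that the construction preserves completeness and the induced variable order so that it can be reapplied. Your observation that the outermost block should be resolved by a linear-time reachability/validity test rather than by one more application of the lemma is exactly the bookkeeping needed to make the tower height come out to $\ell$ rather than $\ell+1$.
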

Note that Corollary~\ref{cor:weakHubie} is already known as it is a special case of the corresponding result for treewidth in~\cite{Chen04}. However, we will show that a similar approach to that of Lemma~\ref{lem:OBDD} can be used to derive several generalizations of the result of~\cite{Chen04}: we show that we can add quantification to bounded width structured d-DNNF, a generalization of OBDD (see Section~\ref{sec:ddnnf}). Since several classes of CNF formulas are known to yield bounded width structured d-DNNF~\cite{BovaCMS15}, this directly yields QBF algorithms for these classes, see Section~\ref{sec:graphs} for details.

\section{Bounded width structured d-DNNF}
\label{sec:ddnnf}

\subsection{Definitions}
\label{sec:defnorm}%\stefan{We should really discuss the relation to SDD. What we have here are essentially the same concepts without the strange condition on the subs.}

\paragraph{Complete structured DNNF.} A {\em vtree} $T$ for a set of variables $X$ is a rooted tree where every non-leaf node has exactly two children and the leaves of $T$ are in one-to-one correspondence with $X$. A {\em complete structured} DNNF $(D,T,\lambda)$ is a DNNF $D$ together with a vtree $T$ for $\var(D)$ and a labelling $\lambda$ of the nodes of $T$ with gates of $D$ such that:
\begin{itemize}
\item If $t$ is a leaf of $T$ labeled with variable $x \in X$ then $\lambda(t)$ contains only inputs of $D$ labeled with either $x$, $\neg x$.
\item For every gate $u$ of $D$, there exists a unique node $t_u$ of $T$ such that $u \in \lambda(t)$.
\item There is no non-leaf node $t$ of $T$ such that $\lambda(t)$ contains an input of $D$.
\item For every $\land$-gate $u$ with inputs $v_1, v_2$, we have $t_{v_1} \ne t_{v_2}$.
\item For every edge $(u,v)$ of $D$:
  \begin{itemize}
  \item Either $v$ is an $\wedge$-gate, $u$ is an $\vee$-gate or an input and $t_u$ is the child of $t_v$.
  \item Or $v$ is an $\vee$-gate, $u$ is an $\wedge$-gate and $t_u = t_v$.
  \end{itemize}
\end{itemize}
Intuitively, $T$ can be seen as a skeleton supporting the gates of $D$, as depicted on Figure~\ref{fig:sdnnfex}. In the following, when the vtree and its labelling is not necessary, we may refer to a complete structured DNNF $(D,T,\lambda)$ by only mentioning the circuit $D$. 

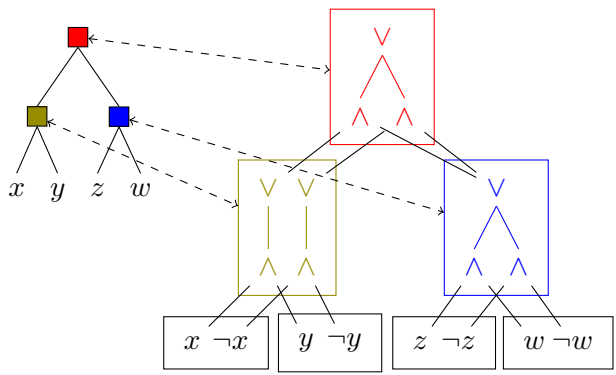
\begin{figure}
  \centering
  \begin{tikzpicture}
  \begin{scope}
  \Tree [.\node[draw, fill=red](s){~}; [.\node[draw, fill=olive](t){~}; $x$ $y$ ] [.\node[draw, fill=blue](u){~}; $z$ $w$ ] ];
\end{scope}

\begin{scope}[shift={(4cm, 0)}]
  
\tikzset{every tree node/.style={color=red}}
\tikzset{edge from parent/.style={draw, color=red}}

\Tree [.\node(r){$\vee$}; \node(a1){$\wedge$}; \node(a2){$\wedge$}; ];

\node[draw,  fit=(r) (a1) (a2), color=red] (redbox) {};
\draw[<->, dashed] (s) -- (redbox);

\begin{scope}[shift={(-1.5cm,-2cm)}]
\tikzset{every tree node/.style={color=olive}}
\tikzset{edge from parent/.style={draw, color=olive}}

\Tree [.\node(v1){$\vee$}; \node(l1){$\wedge$}; ];
\end{scope}

\begin{scope}[shift={(-1cm,-2cm)}]
\tikzset{every tree node/.style={color=olive}}
\tikzset{edge from parent/.style={draw, color=olive}}

\Tree [.\node(v2){$\vee$}; \node(l2){$\wedge$}; ];

\node[draw,  fit=(v1) (l1) (l2) (v2), color=olive] (olivebox) {};

\draw[<->, dashed] (t) -- (olivebox);
\end{scope}

\begin{scope}[shift={(1.5cm,-2cm)}]
\tikzset{every tree node/.style={color=blue}}
\tikzset{edge from parent/.style={draw, color=blue}}

\Tree [.\node(v3){$\vee$}; \node(l3){$\wedge$}; \node(l4){$\wedge$}; ];

\node[draw,  fit=(v3) (l3) (l4), color=blue] (bluebox) {};
\draw[<->, dashed] (u) -- (bluebox);

\end{scope}

\draw (a1) -- (v1);
\draw (a2) -- (v3);
\draw (a1) -- (v3);
\draw (a2) -- (v2);

\node[below of=l1, left of=l1] (x) {$x$};
\node[right of=x, node distance = 0.5cm] (nx) {$\neg x$};

\node[right of=nx, node distance=1cm] (y) {$y$};
\node[right of=y, node distance = 0.5cm] (ny) {$\neg y$};

\node[right of=ny] (z) {$z$};
\node[right of=z, node distance = 0.5cm] (nz) {$\neg z$};

\node[right of=nz] (w) {$w$};
\node[right of=w, node distance = 0.5cm] (nw) {$\neg w$};

\draw (y) -- (l1) -- (x);
\draw (ny) -- (l2) -- (nx);

\draw (z) -- (l3) -- (w);
\draw (nz) -- (l4) -- (nw);

\node[draw,fit=(x) (nx)] {};
\node[draw,fit=(y) (ny)] {};
\node[draw,fit=(z) (nz)] {};
\node[draw,fit=(w) (nw)] {};

\end{scope}
  \end{tikzpicture}

  \caption{A vtree $T$ and a complete structured DNNF $(D,T,\lambda)$, where $\lambda$ is represented with colors and dashed arrows.}
  \label{fig:sdnnfex}
\end{figure}

\paragraph{Width.} The \emph{width} of a complete structured DNNF $(D,T,\lambda)$ is defined as $\max_{t \in V(T)} |\{ v \in \lambda(t) \mid v \text{ is an $\vee$-gate} \}|$. For example, the DNNF pictured on Figure~\ref{fig:sdnnfex} has width $2$ since the green node is labeled with $2$ $\vee$-gates. 

Note that for the width we do not take into account $\land$-gates. This is for several reasons: first, only considering $\lor$-gates simplifies some of the arguments later on and gives cleaner results and proofs. Moreover, it is not hard to see that when rewriting OBDD as DNNF, the width of the original OBDD is exactly the width of the resulting circuit. The same is also true for the width of SDD~\cite{BovaS17}, another important representation of Boolean function~\cite{Darwiche11}. Thus, width defined only on $\lor$-gates allows a tighter connection to the literature. Finally, the number of $\land$-gates in a complete structured DNNF is highly connected to the width as we define it as we see in the following observation.

\begin{observation}
  \label{obs:smallbags}
 Let $(D,T,\lambda)$ be a complete structured DNNF of width $w$. We can in linear time in $|D|$ compute a complete structured DNNF $(D',T,\lambda')$ of width $w$ and equivalent to $D$. Moreover, for every node $t$ of $T$, we have $|\lambda'(t)| \le (w^2+w)$. Observe that $D'$ is thus of size at most $2(w+w^2)|\var(D)|$.
\end{observation}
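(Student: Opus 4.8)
The plan is to obtain $(D',T,\lambda')$ from $(D,T,\lambda)$ by a single normalization pass that keeps the vtree $T$ fixed and only \emph{removes redundant gates}, so that equivalence, completeness, and all structural conditions of a complete structured DNNF are preserved essentially for free. Since merging gates can only decrease the number of $\lor$-gates in any bag, the width stays at most $w$; the whole content of the observation is therefore the bound $|\lambda'(t)| \le w^2 + w$, which I will establish by splitting each bag into (at most) $w$ many $\lor$-gates plus (at most) $w^2$ many $\land$-gates.

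The $\lor$-gate count is immediate from the definition of width. For the $\land$-gates I would first argue that every $\land$-gate $u$ in a bag $t$ receives exactly one input from each of the two children $t_1,t_2$ of $t$: by the edge condition each input of $u$ lies in a child of $t$, and since $u$'s two inputs $v_1,v_2$ satisfy $t_{v_1}\neq t_{v_2}$, one sits in $t_1$ and the other in $t_2$. Hence every $\land$-gate of $t$ is determined by the ordered pair $(v_1,v_2)\in \lambda(t_1)\times\lambda(t_2)$ feeding it, and two $\land$-gates with the same pair compute the same function and may be merged. The number of distinct pairs is bounded by (number of candidate inputs in $t_1$) times (number of candidate inputs in $t_2$). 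The candidate inputs in a child are exactly its $\lor$-gates when the child is internal --- at most $w$ by the width bound --- and its input gates when the child is a leaf. To keep the latter small I would first merge, at every leaf labelled $x$, all copies of the input $x$ into one and all copies of $\neg x$ into one, leaving at most two gates per leaf. Thus each child offers at most $\max(w,2)$ candidates, so after merging redundant $\land$-gates each internal bag retains at most $w^2$ of them (for $w\ge 2$; the width-$1$ case is degenerate), giving $|\lambda'(t)|\le w^2+w$.

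I would realize this as a bottom-up traversal of $T$: at a leaf, bucket inputs by their literal; at an internal node, bucket its $\land$-gates by the integer pair $(v_1,v_2)$ identifying their inputs, keep one representative per bucket, and redirect the incoming edges of the discarded $\land$-gates (which come from $\lor$-gates of the same bag) to the representative, deleting the resulting duplicate edges. Bucketing by pairs of gate indices is a radix sort, so the total work is linear in the number of gates and edges of $D$, i.e.\ $O(|D|)$. For the size bound I would simply sum $|\lambda'(t)|$ over $T$: a binary vtree on $|\var(D)|$ variables has $2|\var(D)|-1$ nodes, each now carrying at most $w^2+w$ gates (leaves carry at most $2 \le w^2+w$), so the total number of gates is at most $(2|\var(D)|-1)(w^2+w)\le 2(w+w^2)|\var(D)|$.

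The only genuinely delicate points, and where I expect to spend the care, are (i) checking that merging never violates decomposability or the structured edge conditions --- it does not, since merged gates stay in their original bag and a merged $\land$-gate still draws one input from each child --- and (ii) the leaf bookkeeping that makes the $w^2$ bound come out cleanly, in particular the observation that a leaf child contributes at most two candidate inputs. Everything else is routine counting.
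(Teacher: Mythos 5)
Your proposal is correct and follows essentially the same route as the paper's proof: bound the $\lor$-gates in each bag by $w$ directly from the definition of width, merge $\land$-gates that share the same ordered pair of inputs (at most $w^2$ pairs, one from each child bag), realize the merge in linear time by bucketing $\land$-gates on their input pairs and redirecting the edges of discarded gates to the kept representative, and finally sum $w^2+w$ over the at most $2|\var(D)|$ nodes of $T$. You are in fact slightly more careful than the paper, which tacitly assumes every $\land$-gate's inputs are $\lor$-gates and thus glosses over the leaf-children case that you handle by first deduplicating literal inputs (and over the resulting $w=1$ corner case, which neither argument fully covers but which is immaterial to how the observation is used).
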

\begin{proof}
For the first statement, note that by definition there are at most $w$ $\vee$-gates in $\lambda(t)$. Now, the inputs of every $\wedge$-gates of $\lambda(t)$ are $\vee$-gates of $\lambda(t_1)$ and $\lambda(t_2)$ where $t_1,t_2$ are the children of $t$ in $T$. Thus, there are at most $w^2$ possible ways of branching these $\wedge$-gates. So if we eliminate $\land$-gates that have identical inputs and keep for every combination at most one of them, we get $D'$ with the desired size bound on $\lambda'(t)$. However, we can neither naively compare the children of all $\land$-gates nor order the $\land$-gates by their children to eliminate $\land$-gates with identical inputs since both approaches would violate the linear time requirement.

To avoid this slight complication, we proceed as follows: in a first step, we count the $\land$-gates in $\lambda(t)$. If there are at most $w^2$ of them, we satisfy the required upper bound, so we do nothing. Otherwise, we create a array of size $w^2$ indexed by the pairs of potential inputs of $\land$-gates in $\lambda(t)$. We initialize all cells to some null-value. Now we iterate over the $\land$-gates in $\lambda(t)$ and do the following for every such gate $u$: if the cell indexed by the children of $u$ is empty, we store $u$ in that cell and continue. If there is already a gate $u'$ in the cell, we connect all gates that $u$ feeds into to $u'$ and delete $u$ afterwards. It is easy to see that the resulting algorithm runs in linear time, computes a $D'$ equivalent to $D$ and satisfies the size bounds on $\lambda(t)$.

Since $T$ is a tree where every node but the leaves has exactly $2$ children, the number of nodes in $T$ is at most $2n$. Now, because of $|\lambda'(t)| \leq w^2+w$, the bound on $|D'|$ follows directly.%\stefan{yeah, this should probably go into the appendix at some point...}
\end{proof}

We remark that complete structured  DNNF as defined above are more restrictive than structured DNNF as defined in~\cite{PipatsrisawatD08}. That definition only gives a condition on the way decomposable $\wedge$-gates can partition variables, following the vtree. However, it is not hard to see that one can add dummy gates ($\vee$-gate and $\wedge$-gate of fan-in one) to force the circuit to the form we define with only a polynomial increase in its size. However, this transformation may lead to large width circuits. Moreover, it follows from the fact that OBDD can be rewritten into structured d-DNNF that making such a d-DNNF complete may increase the width arbitrarily when one does not change the vtree~\cite{BolligW00}.

\paragraph{Using constants.} Our definition of complete structured DNNF does not allow constant inputs. This is in general not a problem as constants can be propagated in the circuits and thus eliminated. However, it is not directly clear how this propagation could affect the width in our setting. Moreover, most of our algorithms are easier to describe by allowing constants. So let us spend some time to deal with constants in our setting. To this end, we introduce the notion of {\em extended vtrees}. An extended vtree $T$ on a variable set $X$ is defined as a vtree in which we allow some leaves to be unlabeled. Every variable of $X$ must be the label of exactly one leaf still. A complete structured DNNF $(D,T,\lambda)$ is defined as for an extended vtree with the additional requirement that for every unlabeled leaf $\ell$ of $T$, $\lambda(\ell)$ is a set of constant inputs of $D$.

We now show that we can always remove the unlabeled leaves without increasing the width. 

\begin{lemma}
  \label{lem:removecstleaf}
  There is a linear time algorithm that, given a complete structured DNNF (resp. d-DNNF) $(D,T,\lambda)$ of width $w$ where $T$ is an extended vtree, computes a complete structured DNNF (resp. d-DNNF) $(D',T',\lambda')$ of width $w$ that is equivalent to $D$.
\end{lemma}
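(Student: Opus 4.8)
The plan is to remove unlabeled leaves of the extended vtree one at a time, starting from a deepest one, and to show that each such removal can be performed locally without increasing the width. The key structural observation is that an unlabeled leaf $\ell$ carries only constant inputs in $\lambda(\ell)$, and by the edge conditions of a complete structured DNNF these constants feed exactly into the $\land$-gates sitting at the parent node $t$ of $\ell$ in $T$. So the effect of $\ell$ on the circuit is entirely mediated through those $\land$-gates.

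First I would pick an unlabeled leaf $\ell$ whose sibling $s$ is either a labeled leaf or an internal node, and let $t$ be their common parent. Each $\land$-gate $u \in \lambda(t)$ has one input among the constants of $\lambda(\ell)$ and one input among the $\vee$-gates of $\lambda(s)$ (up to the orientation fixed by the fourth and fifth bullets of the definition). I would then evaluate the constant feeding each such $u$: if the constant is $0$, the gate $u$ computes $0$ and I replace it by the constant $0$ (or delete it and propagate $0$ into the $\vee$-gates of $\lambda(t)$ that it feeds); if the constant is $1$, then $u$ simply computes the value of its $\vee$-gate input in $\lambda(s)$, so I can short-circuit $u$, rewiring every gate that $u$ feeds to read directly from that $\vee$-gate of $\lambda(s)$. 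After doing this for all $\land$-gates of $\lambda(t)$, no gate of the circuit refers to $\lambda(\ell)$ anymore, and I can delete $\ell$ from the tree, contracting $t$ so that $s$ takes the place of $t$ (the $\vee$-gates formerly at $t$ are re-attached appropriately, or merged with the $\vee$-gates of $s$ if needed). This yields a new pair $(D', T')$ where $T'$ has one fewer unlabeled leaf, and iterating gives an ordinary vtree.

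The main things to verify are that the width does not increase and that determinism is preserved in the d-DNNF case. For width: the rewriting only deletes $\land$-gates and rewires their consumers, and the $\vee$-gates that survive at the contracted node $t$ are exactly a subset of the original $\vee$-gates of $\lambda(t)$ (possibly merged into $\lambda(s)$), so the number of $\vee$-gates at any node of $T'$ never exceeds $w$. For determinism: short-circuiting an $\land$-gate with a $1$-constant input does not change the function any gate computes, and replacing a $0$-valued $\land$-gate by the constant $0$ likewise preserves all computed functions, so no $\vee$-gate that was deterministic becomes non-deterministic. The linear running time follows because each unlabeled leaf is processed once and the work at $t$ is proportional to $|\lambda(t)|$, which after Observation~\ref{obs:smallbags} we may assume is $O(w^2)$, and the total over all nodes telescopes to $O(|D|)$.

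The step I expect to be the main obstacle is the bookkeeping of the edge orientation and the contraction of $t$: I must make sure that after deleting $\ell$ and merging $t$ into $s$, the resulting labelling $\lambda'$ still satisfies all five bullets of the definition of a complete structured DNNF — in particular that every $\land$-gate still has its two inputs sitting at the two distinct children of its tree node, and that the alternation between $\vee$-gates (at $t_v = t_u$) and $\land$-gates (at the children) is respected. This is purely a matter of carefully tracking where each surviving gate lands in $T'$, but it is where a naive argument could accidentally create an $\land$-gate whose two inputs lie at the same tree node, violating the fourth bullet; handling the case where $s$ is itself a leaf versus an internal node is the delicate part.
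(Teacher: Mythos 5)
Your proposal is correct and follows essentially the same route as the paper's proof: remove unlabeled leaves one at a time, delete $\land$-gates fed by the constant $0$, short-circuit those fed by the constant $1$, merge the parent with the sibling, and observe that the surviving $\vee$-gates at the merged node are a subset of those of $\lambda(t)$, so the width does not grow. The "delicate bookkeeping" you flag is resolved in the paper exactly as you sketch, by flattening the resulting $\vee$-over-$\vee$ chains so that the $\vee$-gates of $\lambda(t)$ connect directly to the $\wedge$-gates of the sibling's label.
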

\begin{proof}
Given an extended vtree $T$ and a leaf $\ell$, let $T \setminus \ell$ be the vtree obtained by removing the leaf $\ell$ of $T$ and by merging the father and the sibling of $\ell$ in $T$.
We first show that there is an algorithm that, given a complete structured DNNF (resp. d-DNNF) $(D,T,\lambda)$ of width $w$ and a non-labeled leaf $\ell$ of $T$, computes in linear time an equivalent complete structured DNNF $(D',T \setminus \ell, \lambda')$ of width at most $w$.

Let $t$ be the father and $t_s$ the sibling of $\ell$ in $T$. We let $t'$ be the vertex of $T \setminus \ell$ obtained by merging $t$ and $t_s$. By definition, all gates of $\lambda(t)$ that are connected to gates in $\lambda(\ell)$ are $\wedge$-gates. We remove every $\wedge$-gate of $\lambda(t)$ connected to constant $0$ as they are equivalent to $0$ and are connected to $\vee$-gates of $\lambda(t)$. We next deal with the $\wedge$-gates of $\lambda(t)$ connected to the constant $1$. For every such gate $v$, we connect its other input to all output of $v$. This does not change the value computed by the output of $v$ and does not affect the determinism of the DNNF.

Now observe that the circuit has the following form: $\vee$-gates of $\lambda(t)$ are connected to $\vee$-gates of $\lambda(t_s)$. Without changing the function computed nor determinism, we can connect the $\vee$-gates of $\lambda(t)$ directly to the input of its inputs and thus remove every $\vee$-gate of $\lambda(t_s)$. Now the circuit has the following form: $\vee$-gates of $\lambda(t)$ are connected to $\wedge$-gates of $\lambda(t_s)$. We thus define $\lambda'(t')$ as the remaining $\vee$-gates of $\lambda(t)$ and $\wedge$-gates of $\lambda(t_s)$ and get a complete structured DNNF for $T \setminus \ell$. The number of $\vee$-gates in $\lambda(t')$  is less than in $\lambda(t)$ so the width has not increased.

Iterating this construction and observing that every $\lambda(t)$ is treated only once, we get the claim of the lemma.
\end{proof}

\subsection{Existential quantification on bounded width d-DNNF}
\label{sec:existquant}

In this section, we give an algorithm that allows us to quantify variables in d-DNNF. The main result is the following.

\begin{theorem}\label{thm:mainprojection}
There is an algorithm that, given a complete structured  DNNF $(D,T,\lambda)$ of width $w$ and $Z \subseteq \var(D)$, computes in time $2^{O(w)}|D|$ a complete structured d-DNNF $(D',T',\lambda')$ of width at most $2^w$ having a gate computing $\exists Z~D$ and another gate computing $\neg \exists Z~D$.
\end{theorem}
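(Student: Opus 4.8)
The plan is to generalize the powerset (subset) construction of Lemma~\ref{lem:OBDD} from the linear skeleton of an OBDD to the tree skeleton given by the vtree $T$. First I would invoke Observation~\ref{obs:smallbags} to assume that each bag $\lambda(t)$ contains at most $w$ $\vee$-gates (and at most $w^2$ $\land$-gates), so that the quantity $2^w$ really controls the blowup. The key idea is that for an OBDD the ``state'' reachable after reading an assignment was a \emph{subset} of the nodes in a layer; here, because the circuit branches according to $T$, the natural object to track at a vtree node $t$ is a subset $S \subseteq \{v \in \lambda(t) : v \text{ is a } \vee\text{-gate}\}$, recording which $\vee$-gates of $\lambda(t)$ can be ``satisfied'' by some extension of the current partial assignment over the variables of $Z$ lying below $t$. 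Since there are at most $w$ such $\vee$-gates, there are at most $2^w$ such subsets, which is exactly the target width.

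I would build $(D',T',\lambda')$ bottom-up along $T$, keeping $T' = T$ (restricted to the variables not in $Z$, using Lemma~\ref{lem:removecstleaf} to clean up afterwards). At a leaf $t$ labeled by a variable $x$: if $x \notin Z$ we keep the two literal inputs essentially unchanged and record for each the subset of original $\vee$-gates it forces; if $x \in Z$ we forget $x$ by replacing it with a constant, recording the subset of $\vee$-gates satisfiable by \emph{either} value of $x$. At an internal node $t$ with children $t_1, t_2$, I would create, for each pair of target subsets $(S_1, S_2)$ arising at the children, a new $\land$-gate combining the corresponding new gates, and then group these into new $\vee$-gates indexed by the subset $S \subseteq \lambda(t)$ of original $\vee$-gates that become satisfiable; the new $\vee$-gate labeled $S$ collects exactly the $\land$-gates whose child-subsets $(S_1,S_2)$ propagate, through the original wiring of $\land$-gates feeding each $v \in S$, to make every $v \in S$ (and no $v \notin S$) reachable. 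Decomposability is inherited because we never change which variable-sets sit below $t_1$ versus $t_2$. The invariant to maintain and prove by induction is: the new $\vee$-gate at $t$ indexed by $S$ computes the indicator of those assignments to the non-$Z$ variables below $t$ whose extensions over the $Z$-variables below $t$ satisfy exactly the $\vee$-gates in $S$.

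The gate computing $\exists Z\,D$ is then obtained at the root by taking the $\vee$ over all new root-gates whose index set $S$ contains the output $\vee$-gate of $D$; the gate computing $\neg\exists Z\,D$ is the $\vee$ over all root-gates whose index $S$ does \emph{not} contain the output gate. This simultaneously yields both required gates. The two properties that need the most care are \textbf{determinism} and the \textbf{width bound}. Determinism is the genuine payoff of passing to subsets: because each new $\vee$-gate is indexed by a distinct subset $S$ of original satisfiable $\vee$-gates, two distinct children of a new $\vee$-gate correspond to assignments producing \emph{different} satisfiable-sets, so no assignment can satisfy two of them at once---this is precisely the automata-determinization phenomenon, and verifying it rigorously (that distinct index sets partition the assignment space) is, I expect, the main obstacle. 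The width bound is then immediate: at most $2^w$ new $\vee$-gates per node since they are indexed by subsets of a size-$\le w$ set. Finally the running time: each internal node produces at most $(2^w)^2 = 2^{O(w)}$ candidate $\land$-gates and $2^w$ new $\vee$-gates, so summing over the $O(|\var(D)|)$ nodes of $T$ gives $2^{O(w)}|D|$; a last application of Observation~\ref{obs:smallbags} and Lemma~\ref{lem:removecstleaf} trims $\land$-gates and constant leaves without increasing the width, completing the construction.
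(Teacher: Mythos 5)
Your proposal is correct and follows essentially the same route as the paper's proof: the paper formalizes your ``satisfiable-subset'' invariant as the notion of a \emph{shape} $S\subseteq O_t$ of a partial assignment, proves your propagation step as Lemma~\ref{lem:compshape} (the $\join$ operation on shapes), and obtains determinism exactly as you argue, from the fact that every assignment has a unique shape. The only cosmetic difference is at the root, where the paper assumes w.l.o.g.\ a single output $\vee$-gate $r_o$ so that $v_r(\{r_o\})$ and $v_r(\emptyset)$ directly give $\exists Z\,D$ and $\neg\exists Z\,D$, rather than taking a further disjunction over index sets.
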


In the remainder of this section, we will prove Theorem~\ref{thm:mainprojection}. Let $(D,T,\lambda)$ be a complete structured DNNF. Let $X$ be the set of variables of $D$, $Z \subseteq X$ the variables that we will quantify and $w$ the width of~$D$.

Given a node $t$ of $T$, let $\var(t)$ be the set of variables which are at the leaves of the subtree of $T$ rooted in $t$. We define $\forgot(t) := Z \cap \var(t)$  and $\kept(t) := \var(t) \setminus \forgot(t)$. Intuitively, $\forgot(t)$ contains the the set of variables that are quantified away below $t$ while $\kept(t)$ contains the remaining variables under $t$. Let $D_v$ for a gate $v$ denote the sub-DNNF of $D$ rooted in $v$.

\paragraph{Shapes.} A key notion for our algorithm will be what we call {\em shapes}. Let $t$ be a node of $T$ and let $O_t$ be the set of $\vee$-gates of $D$ labelling $t$. An assignment $\tau : \kept(t)\rightarrow\{0,1\}$ is of shape $S \subseteq O_t$ if and only if \[ S = \{s \in O_t \mid \exists \sigma: \forgot(t) \rightarrow \{0,1\}, \tau \cup \sigma \models D_s\}.\]

We denote by $\shape_t \subseteq 2^{O_t}$ the set of shapes of a node $t$. Observe that $|\shape_t| \leq 2^{|O_t|} \leq 2^w$ since $|O_t|\leq w$ by definition.

The key observation is that $\shape_t$ can be inductively computed. Indeed, let $t$ be a node of $T$ with children $t_1,t_2$ and let $S_1 \in \shape_{t_1}$, $S_2 \in \shape_{t_2}$. We define $S_1 \join S_2 \subseteq O_t$ to be the set of gates $s \in O_t$ that evaluate to $1$ once we replace every gate in $S_1$ and $S_2$ by $1$ and every gate in $O_{t_1} \setminus S_1$ and $O_{t_2} \setminus S_2$ by $0$.

%The following lemma shows why $\join$ is a useful operation in our setting.

\begin{lemma}
  \label{lem:compshape}
Let $t$ be node of $T$ with children $t_1,t_2$. Let $\tau_1 : \kept(t_1) \rightarrow \{0,1\}$ be of shape $S_1$ and $\tau_2 : \kept(t_2) \rightarrow \{0,1\}$ be of shape $S_2$ be of shape $S_2$. Then $\tau = \tau_1 \cup \tau_2$ is of shape $S_1 \join S_2$.
\end{lemma}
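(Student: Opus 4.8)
The plan is to unwind the definition of shape and reduce the claim to the definition of the join operation $S_1 \join S_2$. Recall that $\tau = \tau_1 \cup \tau_2$ is an assignment to $\kept(t) = \kept(t_1) \cup \kept(t_2)$, and its shape is the set of $\vee$-gates $s \in O_t$ for which there exists an extension $\sigma : \forgot(t) \to \{0,1\}$ with $\tau \cup \sigma \models D_s$. Since $\var(t)$ is partitioned by the two children, we have $\forgot(t) = \forgot(t_1) \cup \forgot(t_2)$ as a disjoint union, so any such $\sigma$ decomposes uniquely as $\sigma = \sigma_1 \cup \sigma_2$ with $\sigma_i : \forgot(t_i) \to \{0,1\}$. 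First I would fix an arbitrary $s \in O_t$ and show that $s$ lies in the shape of $\tau$ if and only if $s \in S_1 \join S_2$.

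The key structural input is the shape of the circuit guaranteed by the definition of complete structured DNNF: the gate $s \in O_t$ is an $\vee$-gate, its inputs are $\wedge$-gates labelling the same node $t$, and each such $\wedge$-gate has exactly one input in $\lambda(t_1)$ and one in $\lambda(t_2)$ (using decomposability together with the last bullet of the vtree conditions, which forces the two inputs of an $\land$-gate to live in the two distinct children $t_1, t_2$). Moreover, by decomposability the subcircuit $D_s$ splits along the variable partition $\var(t_1) \sqcup \var(t_2)$, so whether $\tau \cup \sigma \models D_s$ depends only on the truth values that the $\vee$-gates of $O_{t_1}$ and $O_{t_2}$ take under the relevant sub-assignments.

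The main step is then to match the two descriptions. By the inductive hypothesis, $\tau_1$ has shape $S_1$, meaning a gate $g \in O_{t_1}$ evaluates to $1$ under some extension $\sigma_1$ of $\tau_1$ precisely when $g \in S_1$; and crucially, since $s$ depends on the children only through which of their $\vee$-gates evaluate to true, the existence of a \emph{single} $\sigma_1$ realizing the whole pattern $S_1$ at once is exactly what the definition of shape provides. The same holds for $t_2$ and $S_2$. Evaluating $D_s$ under $\tau \cup \sigma$ therefore reduces to evaluating $s$ in the Boolean circuit on $O_{t_1} \cup O_{t_2}$ where the gates in $S_1, S_2$ are set to $1$ and those in $(O_{t_1}\setminus S_1)\cup(O_{t_2}\setminus S_2)$ are set to $0$ — and this is precisely the definition of $S_1 \join S_2$. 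I would prove the two directions: if $s \in S_1 \join S_2$, then the witnessing $\sigma_1, \sigma_2$ combine into a $\sigma$ with $\tau \cup \sigma \models D_s$; conversely, any $\sigma = \sigma_1 \cup \sigma_2$ with $\tau \cup \sigma \models D_s$ sets the children's gates to a pattern contained in $(S_1, S_2)$, and monotonicity of the $\wedge/\vee$ circuit above them ensures $s$ still evaluates to $1$ when we set exactly $S_1, S_2$ to true.

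The step I expect to be the main obstacle is the decoupling of the existential quantifiers across the two children: the definition of shape quantifies over a single $\sigma : \forgot(t) \to \{0,1\}$ that must simultaneously witness membership of $s$, whereas the shapes $S_1, S_2$ are each defined by \emph{independent} existential choices of $\sigma_1, \sigma_2$. The reason this decoupling is legitimate is decomposability: because $D_s$ factors over the disjoint variable sets $\var(t_1)$ and $\var(t_2)$, the truth value of $s$ is a monotone function of the truth pattern of $O_{t_1}$ and $O_{t_2}$ alone, so we may choose $\sigma_1$ and $\sigma_2$ independently to realize the maximal achievable patterns $S_1$ and $S_2$. I would make sure to state explicitly that it is exactly the decomposability of the $\land$-gates (variable-disjointness of the two children) that licenses treating $S_1$ and $S_2$ as the faithful summaries of the two sides, and hence justifies that $S_1 \join S_2$ computes the shape of $\tau$.
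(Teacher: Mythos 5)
Your overall strategy coincides with the paper's: fix $s \in O_t$, split any extension $\sigma$ as $\sigma_1 \cup \sigma_2$ along the disjoint union $\forgot(t) = \forgot(t_1) \cup \forgot(t_2)$, use the fact that the $\wedge$-gates feeding $s$ have exactly one input in $O_{t_1}$ and one in $O_{t_2}$, and prove the two inclusions between the shape $S$ of $\tau$ and $S_1 \join S_2$. Your argument for $S \subseteq S_1 \join S_2$ is correct, and phrasing it via monotonicity of the $\wedge/\vee$ layer above the children is a clean alternative to the paper's explicit chase through one $\wedge$-gate. However, there is a genuine error in your justification of the other inclusion. You assert that ``the existence of a single $\sigma_1$ realizing the whole pattern $S_1$ at once is exactly what the definition of shape provides,'' and again that one may ``choose $\sigma_1$ and $\sigma_2$ independently to realize the maximal achievable patterns $S_1$ and $S_2$.'' This is false: the existential quantifier in the definition of shape is scoped \emph{per gate}, i.e.\ $S_1 = \{g \in O_{t_1} \mid \exists \sigma_1 \colon \forgot(t_1) \rightarrow \{0,1\},\ \tau_1 \cup \sigma_1 \models D_g\}$, and distinct gates of $S_1$ may require incompatible witnesses. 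For instance, if $O_{t_1}$ contains one gate computing $z \wedge y$ and another computing $\neg z \wedge y$, with $z \in \forgot(t_1)$ and $\tau_1(y) = 1$, then both gates belong to $S_1$, yet no single $\sigma_1$ makes both true. Hence there is in general no assignment whose induced truth pattern on $O_{t_1} \cup O_{t_2}$ is exactly $(S_1, S_2)$, and the step ``the witnessing $\sigma_1, \sigma_2$ combine into a $\sigma$'' fails as you have justified it.

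The repair is local and is what the paper does: for $s \in S_1 \join S_2$, the definition of the join guarantees some $\wedge$-gate input $s'$ of $s$ whose two children satisfy $s_1 \in S_1$ and $s_2 \in S_2$; you then take witnesses $\sigma_1, \sigma_2$ \emph{for these two particular gates} (which exist simply by membership of $s_1$ in $S_1$ and of $s_2$ in $S_2$), combine them into $\sigma_1 \cup \sigma_2$ using decomposability of $s'$, and conclude $\tau \cup \sigma_1 \cup \sigma_2 \models D_{s'}$ and hence $\models D_s$. No global witness for the whole pattern $S_1$ is needed, because the shape of $\tau$ at $t$ is itself defined with a per-gate existential, so a different $\sigma$ may be chosen for each $s$.
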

\begin{proof}
  Let $S$ be the shape of $\tau$. We first prove $S \subseteq S_1 \join S_2$. So let $s \in S$. Since $\tau$ is of shape $S$, there exists $\sigma: \forgot(t) \rightarrow \{0,1\}$ such that $\tau \cup \sigma$ satisfies $D_s$. Since $s$ is a $\vee$-gate, there must be an input gate $s'$ of $s$ such that $\tau \cup \sigma$ satisfies $s'$. By definition, $s'$ is a $\wedge$-gate with two children $s_1 \in O_{t_1}$ and $s_2 \in O_{t_2}$. Thus $D_{s_1}$ is satisfied by $(\tau \cup \sigma)|_{\var(t_1)} = \tau_1 \cup \sigma|_{\var(t_1)}$. Consequently, $s_1 \in S_1$ since $S_1$ is the shape of $\tau_1$. Similarly $s_2 \in S_2$. Thus, in the construction of $S_1 \join S_2$, both $s_1$ and $s_2$ are replaced by $1$, so $s$ evaluates to $1$, that is, $s \in S_1 \join S_2$.
  
  We now show that $S_1 \join S_2 \subseteq S$. So let $s \in S_1 \join S_2$. Then, in the construction of $S_1 \join S_2$, there must be an input gate of $s$ that is satisfied. So there is an input $s'$ of $s$, that is a $\wedge$-gate with children $s_1 \in O_{t_1}$, $s_2 \in O_{t_2}$ evaluating to $1$. It follows that $s_1$ and $s_2$ have been replaced by $1$ in the construction of $S_1 \join S_2$. Now by definition of $S_1$, there exists $\sigma_1 \colon \forgot(t_1)\rightarrow\{0,1\}$ such that $\tau_1 \cup \sigma_1$ satisfies $D_{s_1}$ and $\sigma_2 \colon \forgot(t_2)\rightarrow\{0,1\}$ such that $\tau_2 \cup \sigma_2$ satisfies $D_{s_2}$. Thus, $(\tau_1 \cup \sigma_1) \cup (\tau_2 \cup \sigma_2) = \tau \cup (\sigma_1 \cup \sigma_2)$ is well-defined because $\sigma_1$ and $\sigma_2$ do not share any variables because $s'$ is decomposable. Moreover, $\tau \cup (\sigma_1 \cup \sigma_2)$ satisfies $D_s$ and thus we have $s \in S$.
  \end{proof}

  \paragraph{Constructing the projected d-DNNF.} We now inductively construct a d-DNNF $D'$ computing $\exists Z~D$ and of width at most $2^w$. The extended vtree $T'$ for $D'$ is obtained from $T$ by removing the labels of the leaves corresponding to variables in $Z$. One can then apply Lemma~\ref{lem:removecstleaf} to obtain a vtree. We inductively construct for every node $t$ of $T$ and $S \in \shape_t$, an $\vee$-gate $v_t(S)$ in $D'$ such that $D'_{v_t(S)}$ accepts exactly the assignment of shape $S$ and we will define $\lambda'(t) = \bigcup_{S \in \shape_t} v_t(S)$.

  If $t$ is a leaf of $T$, then $\kept(t)$ has at most one variable, thus we have at most two assignments of the form $\kept(t) \rightarrow \{0,1\}$. We can thus try all possible assignments to compute $\shape_t$ explicitly and $v_t(S)$ will either be a literal or a constant for each $S \in \shape_t$. We put $v_t(S)$ in $\lambda'(t')$ where $t'$ is the leaf of $T'$ corresponding to $t$. It is clear that if $t'$ is labeled with variable $x$ then $v_t(S)$ is a literal labeled by $x$ or by $\neg x$. If $t'$ is unlabeled, then it corresponds to a leaf $t$ of $T$ labeled with a variable of $Z$. Thus $v_t(S)$ is a constant input so the conditions of structuredness are respected.

Now let $t$ be a node of $T$ with children $t_1,t_2$ and assume that we have constructed $v_{t_1}(S_1)$ for every $S_1 \in \shape_{t_1}$ and $v_{t_2}(S_2)$ for every $S_2 \in \shape_{t_2}$. We define $v_t(S)$ as:
\[ \bigvee_{S_1, S_2: S = S_1\join S_2} v_{t_1}(S_1) \wedge v_{t_2}(S_2) \]
where $S_1,S_2$ run over $\shape_{t_1}$ and $\shape_{t_2}$ respectively.

First of all, observe that the $\wedge$-gates above are decomposable since $D'_{v_{t_1}(S_1)}$ is on variables $\kept(t_1)$ which is disjoint from $\kept(t_2)$, the variables of $D'_{v_{t_2}(S_2)}$.

Moreover, observe that the disjunction is deterministic. Indeed, by induction, $\tau$ satisfies the term $v_{t_1}(S_1) \wedge v_{t_2}(S_2)$ if and only if  $\tau|_{\var(t_1)}$ is of shape $S_1$ and $\tau|_{\var(t_2)}$ is of shape $S_2$. Since an assignment has exactly one shape, we know that $\tau$ cannot satisfy another term of the disjunction.

Finally, we have to show that $v_t(S)$ indeed computes the assignments of shape $S$. This is a consequence of Lemma~\ref{lem:compshape}. Indeed, if $\tau$ is of shape $S$ then let $S_1, S_2$ be the shapes of $\tau|_{\var(t_1)}$ and $\tau|_{\var(t_2)}$ respectively. By Lemma~\ref{lem:compshape}, $S=S_1 \join S_2$ and then $\tau \models v_{t_1}(S_1) \wedge v_{t_2}(S_2)$, and then, $\tau \models v_t(S)$.

Now, if $\tau \models v_{t_1}(S_1) \wedge v_{t_2}(S_2)$ for some $S_1$ and $S_2$ in the disjunction, then we have by induction that $\tau|_{\var(t_1)}$ and $\tau|_{\var(t_2)}$ are of shape $S_1$ and $S_2$ respectively. By Lemma~\ref{lem:compshape}, $\tau$ is of shape $S_1 \join S_2 = S$. 

Let $t'$ be the node of $T'$ corresponding to $t$. We put all gates needed to compute $v_t(S)$ in $\lambda'(t')$ for every $S$. This has the desired form: a level of $\vee$-gate, followed by a level of $\wedge$-gate connected to $\vee$-gates in $\lambda'(t'_1)$ and $\lambda'(t'_2)$.  By construction, the width of the d-DNNF constructed so far is  $\max_t |\shape_t| \leq 2^w$.

Now assume that we have a d-DNNF $D_0$ with a gate $v_t(S)$ for every $t$ and every $S \in \shape_t$ computing the assignments of shape $\tau$. Let $r$ be the root of $T$. We assume w.l.o.g.~that the root of $D$ is a single $\vee$-gate $r_o$ connected to every $\wedge$-gate labeled by $r$. Then $v_r(\{r_o\})$ accepts exactly $\exists Z D$ and $v_r({\emptyset})$ accepts $\neg \exists Z~D$.

\section{Algorithms for graph width measures}\label{sec:graphs}

In this section, we will show how we can use the result of Section~\ref{sec:ddnnf} in combination with known compilation algorithms to show tractability results for QBF with restricted underlying graph structure and bounded quantifier alternation. This generalizes the results of~\cite{Chen04,FerraraPV05,FichteMHW18}. 

We use the following result which can be verified by careful analysis of the construction in~\cite[Section 3]{Darwiche01}; for the convenience of the reader we give an independent proof in Appendix~\ref{app:treewidthlinear}.
\begin{theorem}\label{thm:compileptw}
There is an algorithm that, given a CNF $F$  of primal treewidth $k$, computes in time $2^{O(k)}|F|$ a complete structured d-DNNF $D$ of width $2^{O(k)}$ equivalent to $F$.
\end{theorem}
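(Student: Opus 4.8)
The plan is to prove Theorem~\ref{thm:compileptw} by the standard
dynamic-programming-along-a-tree-decomposition paradigm, but phrased so
that the tree decomposition directly becomes the vtree of the output
d-DNNF and the ``states'' of the dynamic program become the
$\vee$-gates sitting in each label. First I would fix a nice tree
decomposition $(\mathcal{T},(B_t)_t)$ of the primal graph $\gprim$ of
$F$ of width $k$; by a classical result such a decomposition with
$O(|F|)$ nodes can be computed in time $2^{O(k)}|F|$, and one may assume
it is \emph{nice} (leaf, introduce, forget and join nodes, each bag of
size $\le k+1$). The vtree $T$ will be obtained from $\mathcal{T}$: since
a vtree must be a binary tree whose leaves are in bijection with
$\var(D)$, I would attach each variable $x$ to the (unique) highest node
whose bag first contains it and then binarize, so that the branching
structure of $\mathcal{T}$ is preserved up to a constant factor. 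The
main bookkeeping object at a node $t$ is, for every partial assignment
$\alpha\colon B_t\to\{0,1\}$, a gate computing the conjunction of all
clauses of $F$ whose variables have all been introduced at or below $t$,
restricted to assignments extending $\alpha$ on $B_t$.

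The construction then proceeds bottom-up over $\mathcal{T}$. First I
would build, for every node $t$ and every $\alpha\colon B_t\to\{0,1\}$,
a $\vee$-gate $g_{t,\alpha}$ whose function is exactly the set of total
assignments to $\var(t)$ that agree with $\alpha$ on $B_t$ and satisfy
every clause fully contained in $\var(t)$. The recurrence mirrors the
node types: at an introduce node the new variable is plugged in via a
decomposable $\land$-gate combining $g_{t',\alpha'}$ with a literal; at a
forget node we take the disjunction over the two values of the forgotten
variable, which is exactly where the new $\vee$-gates appear; at a join
node we conjoin the two children's gates indexed by the \emph{same}
restriction of $\alpha$ to the shared bag, and decomposability holds
because the two subtrees of a tree decomposition share only the bag,
whose variables are fixed by $\alpha$. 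At each forget node I would also
route the clauses that become ``complete'' there into the conjunction, so
that each clause of $F$ is incorporated exactly once; this is what keeps
the size linear in $|F|$ rather than blowing up. The crucial counting
observations are that there are at most $2^{k+1}$ assignments $\alpha$
per bag, so each label $\lambda(t)$ carries $2^{O(k)}$ many $\vee$-gates,
giving width $2^{O(k)}$; and that the whole circuit has $O(|F|)$ nodes
each of in-degree $2^{O(k)}$, giving total time $2^{O(k)}|F|$.

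The properties demanded by Theorem~\ref{thm:compileptw} then have to be
checked. Decomposability is immediate from the tree-decomposition
property that variables below distinct branches interact only through the
shared bag. Determinism of the $\lor$-gates is the point requiring the
most care: at a forget node the two disjuncts are distinguished by the
value assigned to the forgotten variable, and at a join node by the pair
$(\alpha_1,\alpha_2)$ of child restrictions, so no two disjuncts can be
simultaneously satisfied --- but I would have to argue this cleanly and,
in particular, make sure the two $\land$-inputs of a $\lor$-gate agree on
the shared bag assignment so that the split is genuinely disjoint. The
structuredness conditions (the strict alternation of $\lor$- and
$\land$-levels along edges, the placement of gates into the right vtree
nodes, and $t_{v_1}\ne t_{v_2}$ at every $\land$-gate) are then enforced
by construction, after which Observation~\ref{obs:smallbags} can be
invoked to trim each label to $O(w^2)$ gates if desired.

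I expect the main obstacle to be the faithful alignment of the tree
decomposition with the stricter \emph{complete structured} d-DNNF format
of this paper. The dynamic program naturally produces a structured DNNF
in the loose sense of~\cite{PipatsrisawatD08}, but here every edge must
respect the rigid pattern (an input or $\vee$-gate feeding an
$\land$-gate at the child node, an $\land$-gate feeding a $\vee$-gate at
the same node), and \emph{completeness} requires that every variable be
read on every path. Matching the binarized tree decomposition to the
vtree so that each bag variable is forgotten at exactly one vtree node
while preserving completeness and keeping the width at $2^{O(k)}$ is the
delicate part; once the indexing of gates by bag assignments is set up
correctly, the verification of the d-DNNF axioms and the complexity
bounds is routine.
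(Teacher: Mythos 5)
Your overall strategy coincides with the paper's (Appendix~\ref{app:treewidthlinear}): a bottom-up dynamic program over a nice tree decomposition, one gate per pair (node, bag assignment), with the tree decomposition reused as the vtree, giving width $2^{O(k)}$ from the $2^{k+1}$ assignments per bag. However, there is a concrete gap in where you read the literals. You define $g_{t,\alpha}$ as a function of \emph{all} of $\var(t)$, including $B_t$, and you attach the literal for a variable $x$ at its introduce node. In a nice tree decomposition a variable $x$ lying in the bag of a join node $t$ is introduced (at least) once in \emph{each} of the two branches below $t$, so with your placement the literal $x$ occurs syntactically in both subcircuits $g_{t_1,\alpha}$ and $g_{t_2,\alpha}$, and the $\land$-gate at the join is \emph{not} decomposable. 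Your justification --- ``the two subtrees share only the bag, whose variables are fixed by $\alpha$'' --- conflates the index $\alpha$ with the syntactic variable set of the subcircuit: fixing $x$ in the bookkeeping does not remove the literal from the circuit below. The same placement also contradicts your own vtree, in which the leaf for $x$ sits at the top of $x$'s subtree of bags, and it would force $x$ to label several vtree leaves when $x$ is introduced in several branches.

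The paper's construction resolves exactly this point: the gate $v_t^\tau$ computes $F_t[\tau]$, a function only of the variables already \emph{forgotten} strictly below $t$; the bag variables are carried purely as the index $\tau$ and the literal for $x$ is physically read only at $x$'s unique forget node, as $(x\wedge v_{t_1}^{\tau_1})\vee(\neg x\wedge v_{t_1}^{\tau_0})$. With that change decomposability at join nodes is immediate (a variable occurring in both children's subcircuits would have to lie in $X_t$ and hence be unassigned by neither), and each variable labels exactly one vtree leaf, attached at its forget node. Two smaller omissions: you do not say how to assign each clause to the node where it is ``routed in'' within overall linear time (the paper needs a dedicated doubly-linked-list structure for this, since naively scanning clauses at every bag is superlinear), and at a join node of a nice decomposition the two children carry the \emph{same} bag, so there is no pair $(\alpha_1,\alpha_2)$ to distinguish --- the $\lor$-gate there has fan-in one and determinism is only a nontrivial issue at forget nodes.
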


We lift Theorem~\ref{thm:compileptw} to incidence treewidth by using the following result from~\cite{LampisMM18}.

\begin{proposition}\label{prop:twtransfer}
 There is an algorithm that, given a CNF-formula $F$ of incidence treewidth $k$, computes in time $O(2^k |F|)$ a 3CNF-formula $F'$ of primal treewidth $O(k)$ and a subset $Z$ of variables such that $F\equiv \exists Z F'$.
\end{proposition}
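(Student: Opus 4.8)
The plan is to reduce bounded incidence treewidth to bounded primal treewidth by introducing one auxiliary variable per clause, so that the clause vertices of the incidence graph become genuine variable vertices in the primal graph while only enlarging the clauses by a constant amount. First I would take a tree decomposition of the incidence graph of $F$ of width $k$, which by assumption can be computed (or is given) in the stated time. The key construction is to replace each clause $C = \ell_1 \vee \dots \vee \ell_m$ by a new variable $z_C$, together with a set of short clauses that force $z_C$ to be equivalent to $C$ under the constraint that $z_C$ is existentially quantified away. Concretely, I would add the implications $\neg \ell_i \vee z_C$ for each $i$ (so that any satisfied literal forces $z_C$ true) — but to keep everything in 3CNF and primal-treewidth-bounded, I cannot afford a single large clause $\neg z_C \vee \ell_1 \vee \dots \vee \ell_m$ of unbounded arity. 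Instead I would chain the disjunction through fresh auxiliary variables along the bag structure of the decomposition, splitting the big clause into a sequence of $3$-clauses.

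The main technical step is this clause-splitting done \emph{along the tree decomposition}. For a clause $C$ whose literals are scattered across several bags, I would introduce a fresh ``running'' variable for $C$ at each node of the (minimal connected) subtree of the decomposition that contains the literals of $C$, and use $3$-clauses to propagate a partial-disjunction value up this subtree, combining two children into a parent with a single $3$-clause of the form (left-child-value) $\vee$ (right-child-value) $\vee$ $\neg$(parent-value) and its converse. Because the incidence tree decomposition has width $k$, at any bag only $O(k)$ clauses are ``active'' simultaneously, so I only introduce $O(k)$ new running variables per bag; adding these to the existing $O(k)$ variables of the bag yields a new tree decomposition of the primal graph of $F'$ of width $O(k)$. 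Every new clause is a $3$-clause by construction, and the total number of new variables and clauses is $O(|F|)$, so the whole transformation runs in the claimed $O(2^k|F|)$ time (the $2^k$ factor coming only from the possible cost of first computing the width-$k$ incidence decomposition).

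Finally I would verify correctness: let $Z$ be the set of all the newly introduced variables (the $z_C$ together with the running auxiliary variables). An assignment $\alpha$ to the original variables of $F$ satisfies $F$ if and only if, setting each propagated running variable to the value of the partial disjunction it is meant to track, we obtain a satisfying assignment of $F'$; and conversely any satisfying assignment of $F'$ restricts to a satisfying assignment of $F$ on the original variables, because the chained $3$-clauses force the running variables to compute exactly the disjunction of the literals of each $C$ and force $z_C$ to coincide with $C$. This gives $F \equiv \exists Z\, F'$ as functions on the free (original) variables. The hard part will be organizing the running-variable bookkeeping so that the new decomposition stays width $O(k)$ while every clause of the encoding genuinely has arity at most $3$; this is the point where one must be careful that the ``is $C$ active in this bag'' relation is itself captured by the tree structure so that introducing, reusing, and retiring running variables does not secretly create high-arity clauses or blow up the bag size. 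Once that accounting is set up, equivalence is a routine induction on the propagation subtree of each clause.
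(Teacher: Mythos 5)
You should first note that the paper does not prove Proposition~\ref{prop:twtransfer} at all: it is imported from~\cite{LampisMM18}, so there is no in-paper argument to compare against. Your construction is, in substance, the known one: cover each clause $C$ by the connected subtree of an incidence tree decomposition that contains the clause vertex $C$, thread a ``partial disjunction'' variable for $C$ through that subtree using constant-arity propagation clauses, and existentially quantify all auxiliary variables. The correctness argument you sketch is the right one (set the running variables to the truth values of the partial disjunctions for one direction; observe that the propagation clauses together with the unit clause on the root variable force some literal of $C$ to be true for the other), and in fact only the top-down direction of each equivalence is needed for $F \equiv \exists Z\, F'$, so insisting on full equivalence of $z_C$ with $C$ is harmless but not necessary.

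The one place where your accounting as written does not close is the tension between ``every clause has arity at most $3$'' and ``only $O(k)$ new variables per bag''. With one running variable $z_{C,t}$ per clause $C$ and bag $t$, the propagation constraint at $t$ has the form $z_{C,t} \leftrightarrow z_{C,t_1} \vee z_{C,t_2} \vee \ell_1 \vee \dots \vee \ell_j$, where $j$ can be as large as $k$; splitting this into $3$-clauses inside the bag costs one fresh variable per variable--clause incidence of each clause active in that bag, i.e.\ up to $\Theta(k^2)$ new variables in a single bag, which only yields primal treewidth $O(k^2)$. The standard repair is to refine the decomposition first (make it nice, or expand each bag into a path of copies, each copy responsible for a single variable--clause incidence of that bag); then every propagation step has the form $z^{(i)}_C \leftrightarrow z^{(i-1)}_C \vee \ell$ with all three variables in one copy-bag, each copy-bag carries only the at most $k+1$ original vertices plus at most two running variables per active clause, and the width stays $O(k)$. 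With that refinement your argument goes through; the remaining quibble (the stated $O(2^k|F|)$ versus the $2^{O(k)}|F|$ cost of actually computing a width-$O(k)$ incidence decomposition) is an imprecision inherited from the statement itself, not a defect of your proof.
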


\begin{corollary}\label{cor:compileitw}
There is an algorithm that, given a CNF $F$ formula of primal treewidth $k$, computes in time $2^{O(k)}|F|$ a complete structured d-DNNF $D$ of width $2^{O(k)}$ and a subset $Z$ of variables such that $F\equiv \exists Z D$.
\end{corollary}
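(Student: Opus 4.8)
The plan is to read the statement exactly as posed, with hypothesis \emph{primal} treewidth $k$, and to observe that in this regime the object we must build is produced directly by Theorem~\ref{thm:compileptw}, with the existential projection playing no essential role. Concretely, given a CNF $F$ of primal treewidth $k$, I would first invoke Theorem~\ref{thm:compileptw} to compute in time $2^{O(k)}|F|$ a complete structured d-DNNF $D$ of width $2^{O(k)}$ with $D \equiv F$. Taking $Z = \emptyset$ then gives $\exists Z\, D = D \equiv F$, which is exactly the required conclusion. Thus the corollary is, in its primal-treewidth form, an immediate specialization of Theorem~\ref{thm:compileptw}, and every bound (running time $2^{O(k)}|F|$, width $2^{O(k)}$) is inherited verbatim.

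To produce a proof in which the set $Z$ is genuinely exploited — and to make the argument robust to the two-step pipeline suggested by the placement of Proposition~\ref{prop:twtransfer} — I would alternatively route through that proposition. Since the primal treewidth of $F$ bounds its incidence treewidth (the variables of each clause form a clique in the primal graph, so one may add every clause to a bag already containing all of its variables, whence incidence treewidth is at most $k+1$), Proposition~\ref{prop:twtransfer} applies to $F$ and yields, in time $O(2^k|F|)$, a 3CNF $F'$ of primal treewidth $O(k)$ together with a set $Z$ of fresh variables such that $F \equiv \exists Z\, F'$. I would then compile $F'$ with Theorem~\ref{thm:compileptw} into a complete structured d-DNNF $D$ of width $2^{O(k)}$ equivalent to $F'$, so that $F \equiv \exists Z\, F' \equiv \exists Z\, D$, producing a possibly nonempty $Z$ while still proving exactly the stated primal-treewidth claim.

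The only point requiring care in the pipeline route is the bookkeeping of parameters along the composition. Proposition~\ref{prop:twtransfer} outputs $F'$ of size $|F'| = 2^{O(k)}|F|$ and of primal treewidth $O(k)$; feeding this into Theorem~\ref{thm:compileptw} then costs $2^{O(k)}|F'| = 2^{O(k)} \cdot 2^{O(k)}|F| = 2^{O(k)}|F|$, and the resulting width is $2^{O(k)}$ in the (larger) parameter $O(k)$, which is still $2^{O(k)}$. Hence both constructions meet the stated time and width budgets. I do not expect a genuine obstacle here: in the primal-treewidth formulation the entire content is carried by Theorem~\ref{thm:compileptw}, and the role of $Z$ is either vacuous in the direct route or a transparent renaming of the auxiliary variables introduced by Proposition~\ref{prop:twtransfer} in the pipeline route.
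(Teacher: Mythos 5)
Your proposal is correct, and your second (pipeline) route is precisely the paper's intended proof: the corollary is obtained by composing Proposition~\ref{prop:twtransfer} with Theorem~\ref{thm:compileptw}, with exactly the bookkeeping you describe ($|F'| = 2^{O(k)}|F|$, hence compilation costs $2^{O(k)}|F'| = 2^{O(k)}|F|$, and the width is $2^{O(k)}$ since the primal treewidth of $F'$ is $O(k)$). One important caveat: the word ``primal'' in the statement is a typo for ``incidence''. This is clear from the label of the corollary, from the sentence introducing Proposition~\ref{prop:twtransfer} (``We lift Theorem~\ref{thm:compileptw} to incidence treewidth\dots''), and from the proof of Theorem~\ref{thm:itw}, which applies this corollary to a formula whose \emph{incidence} treewidth is $k$. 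Under the intended reading your first route collapses: taking $Z=\emptyset$ and invoking Theorem~\ref{thm:compileptw} directly is impossible, because incidence treewidth does not bound primal treewidth (a single clause on $n$ variables has incidence treewidth $1$ but primal treewidth $n-1$), so the corollary is not a specialization of Theorem~\ref{thm:compileptw} but a genuine extension of it, and the auxiliary variables $Z$ are essential rather than a ``transparent renaming''. Your pipeline route survives the correction verbatim once you drop the (correct but then superfluous) observation that primal treewidth bounds incidence treewidth: one applies Proposition~\ref{prop:twtransfer} to $F$ directly and then compiles $F'$. So you have in effect written the paper's proof; just discard the ``immediate specialization'' reading rather than presenting it as the primary argument.
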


Note that in~\cite{BovaCMS15} there is another algorithm that compiles bounded incidence treewidth into d-DNNF without introducing new variables that have to be projected away to get the original function. The disadvantage of this algorithm though is that the time to compile is quadratic in the size of $F$. Since we are mostly interested in QBF in which the last quantifier block is existential, adding some more existential variables does not hurt our approach much, so we opted for the linear time algorithm we get from Corollary~\ref{cor:compileitw}.

Now using Theorem~\ref{thm:mainprojection} iteratively, we directly get the following result.

\begin{theorem}\label{thm:itw}
 There is an algorithm that, given a QBF $F$ with free variables, $\ell$ quantifier blocks and of incidence treewidth $k$, computes in time $\exp^{\ell+1}(O(k))|F|$ a complete structured  d-DNNF of width $\exp^{\ell+1}(O(k))$ accepting exactly the models of $F$.
\end{theorem}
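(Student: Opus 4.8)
The plan is to combine the compilation result Corollary~\ref{cor:compileitw} with an iterated application of the projection theorem Theorem~\ref{thm:mainprojection}, peeling off the quantifier blocks from the innermost to the outermost. Write $F = Q_1 X_1 \cdots Q_\ell X_\ell\, F'$ with $Q_\ell = \exists$ as guaranteed by our conventions. Since the incidence treewidth of $F$ is by definition that of its matrix $F'$, I first apply Corollary~\ref{cor:compileitw} to $F'$ to obtain, in time $2^{O(k)}|F|$, a complete structured d-DNNF $D$ of width $w_0 = 2^{O(k)}$ together with a set $Z_0$ of fresh variables such that $F' \equiv \exists Z_0\, D$. Because the innermost block $Q_\ell$ is existential, I absorb $Z_0$ into it, replacing $X_\ell$ by $X_\ell \cup Z_0$; this leaves the number of blocks and their alternation unchanged and gives $F \equiv Q_1 X_1 \cdots Q_{\ell-1} X_{\ell-1}\, \exists (X_\ell \cup Z_0)\, D$.

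Next I iterate Theorem~\ref{thm:mainprojection}, maintaining the invariant that after processing the innermost $\ell - i + 1$ blocks I have a complete structured d-DNNF $D_i$ containing a gate $p_i$ computing $g_i := Q_i X_i \cdots Q_\ell X_\ell\, F'$ and a gate $n_i$ computing $\neg g_i$, each being the root of a complete structured d-DNNF over the variables not yet quantified (i.e. the free variables together with $X_1, \dots, X_{i-1}$). The base case $i = \ell$ is one application of Theorem~\ref{thm:mainprojection} to $D$ with $Z = X_\ell \cup Z_0$, which by construction yields both $p_\ell$ (the gate $v_r(\{r_o\})$ computing $\exists (X_\ell \cup Z_0)\, D = g_\ell$) and $n_\ell$ (the gate $v_r(\emptyset)$ computing $\neg g_\ell$). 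For the inductive step I pass to the subcircuit rooted at the relevant gate, which is again a complete structured d-DNNF over a common vtree of width at most that of $D_{i+1}$. If $Q_i = \exists$, I apply Theorem~\ref{thm:mainprojection} to the subcircuit rooted at $p_{i+1}$ with $Z = X_i$; the two distinguished output gates compute $\exists X_i\, g_{i+1} = g_i$ and its negation. If $Q_i = \forall$, I use $\forall X_i\, g_{i+1} \equiv \neg \exists X_i\, \neg g_{i+1}$ and apply the theorem instead to the subcircuit rooted at $n_{i+1}$; the output gate computing $\neg \exists X_i\, \neg g_{i+1}$ is then $g_i$ and the one computing $\exists X_i\, \neg g_{i+1}$ is $\neg g_i$, so the invariant is preserved. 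After the step $i = 1$ I output the subcircuit rooted at $p_1$, a complete structured d-DNNF over the free variables computing $g_1 = F$.

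For the bounds, each application of Theorem~\ref{thm:mainprojection} turns width $w$ into width at most $2^w$ with no hidden constant, so starting from $w_0 = \exp^{1}(O(k))$ and performing exactly $\ell$ applications gives final width $\exp^{\ell}(w_0) = \exp^{\ell+1}(O(k))$. To control intermediate sizes I apply Observation~\ref{obs:smallbags} after each step, so that a circuit of width $w_j$ has size $O(w_j^2 |F|)$. The $j$-th application then costs $2^{O(w_{j-1})}|D_{j-1}| = \exp^{j+1}(O(k))\,|F|$, and since the widths grow doubly exponentially the last term dominates the whole sum up to a constant factor, yielding total running time $\exp^{\ell+1}(O(k))\,|F|$.

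I expect the main obstacle to be the bookkeeping that makes the alternation work, namely simultaneously carrying a gate for $g_i$ and a gate for $\neg g_i$: this is precisely what lets a universal block be simulated by $\neg \exists \neg$ without ever negating a d-DNNF by hand, for which no size-efficient procedure is available in general. The two facts that make this clean are that Theorem~\ref{thm:mainprojection} outputs both $\exists Z\, D$ and $\neg \exists Z\, D$ inside a single complete structured d-DNNF, and that the subcircuit rooted at any root gate $v_r(S)$ is itself a complete structured d-DNNF over the reduced vtree of width still bounded by $2^w$, so the theorem reapplies verbatim. A secondary point to verify is that the $O(k)$ in the exponent tower does not accumulate over the $\ell$ rounds; this hinges on the width bound of Theorem~\ref{thm:mainprojection} being the clean $2^w$ rather than $2^{O(w)}$, so that the single $O(\cdot)$ sits only in the base $w_0$.
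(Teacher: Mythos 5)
Your proposal is correct and follows essentially the same route as the paper's own proof: compile the matrix via Corollary~\ref{cor:compileitw}, absorb the auxiliary variables into the innermost existential block, and then peel off quantifier blocks by iterating Theorem~\ref{thm:mainprojection}, using the fact that it outputs gates for both $\exists Z\,D$ and $\neg\exists Z\,D$ so that universal blocks are handled via $\forall X\,A \equiv \neg\exists X\,\neg A$. Your write-up is in fact more detailed than the paper's (explicit invariant, and the use of Observation~\ref{obs:smallbags} to keep intermediate sizes linear in $|F|$), but the underlying argument is the same.
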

\begin{proof}
Let $F = Q_1X_1\dots \exists X_\ell G$. We use Corollary~\ref{cor:compileitw} to construct a structured DNNF $D$ of width $2^{O(k)}$ such that $G \equiv \exists Z D$, that is $F \equiv Q_1X_1\dots \exists (X_\ell \cup Z) G$. By projection $X_\ell \cup Z$ using Theorem~\ref{thm:mainprojection}, we can construct a complete structured  d-DNNF $D'$ of width $2^{2^{O(k)}}$ computing $\exists X_\ell G$ and $\neg \exists X_\ell G \equiv \forall X_\ell \neg G$ simultaneously. Now we apply iteratively Theorem~\ref{thm:mainprojection} on $X_i$ to compute simultaneously $Q_{k} X_k \dots \exists X_\ell G$ and $\neg Q_k X_k \neg (Q_{k+1} X_{k+1} \dots \exists X_\ell G)$. This is possible to maintain it inductively since $\forall X_k A \equiv \neg \exists X_k \neg A$. Each step blows the width of the circuit by a single exponential, resulting in the stated complexity.
\end{proof}

As an application of Theorem~\ref{thm:itw}, we give a result on model counting.

\begin{corollary}\label{cor:countingitw}
 There is an algorithm that, given a QBF $F$ with free variables, $\ell$ quantifier blocks and of incidence treewidth $k$, computes in time $\exp^{\ell+1}(O(k))|F|$ the number of models of $F$.
\end{corollary}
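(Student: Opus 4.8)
The plan is to combine Theorem~\ref{thm:itw} with the classical observation that model counting is tractable on smooth deterministic DNNF, so essentially no new ideas are needed beyond a careful bookkeeping of the running time.

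First I would invoke Theorem~\ref{thm:itw} to compute, in time $\exp^{\ell+1}(O(k))|F|$, a complete structured d-DNNF $(D,T,\lambda)$ on the free variables of $F$ whose accepted assignments are exactly the models of $F$, and whose width is $w = \exp^{\ell+1}(O(k))$. Since the construction runs in the stated time we have $|D| \le \exp^{\ell+1}(O(k))|F|$ (alternatively, this size bound follows directly from Observation~\ref{obs:smallbags}). As $\var(D)$ is exactly the set of free variables of $F$, counting the models of $F$ reduces to counting the satisfying assignments of $D$ over $\var(D)$.

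Second, I would count the satisfying assignments of $D$ by a single bottom-up pass along $T$, assigning to each gate $v$ the number $c(v)$ of satisfying assignments of $D_v$ over $\var(t_v)$. A leaf literal gets $c(v)=1$; a decomposable $\land$-gate with inputs $v_1,v_2$ gets $c(v)=c(v_1)\cdot c(v_2)$, which is correct precisely because $D_{v_1}$ and $D_{v_2}$ are on disjoint variable sets; and a deterministic $\lor$-gate gets $c(v)=\sum_i c(v_i)$, which counts each assignment exactly once because determinism forbids any assignment from satisfying two inputs simultaneously. The value at the output gate of $D$ is then the number of models of $F$. This pass is linear in $|D|$, so the total running time is $\exp^{\ell+1}(O(k))|F|$, as required.

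The only point that genuinely needs care is that summation at $\lor$-gates correctly accounts for all variables of $\var(t_v)$, i.e.\ that $D$ is \emph{smooth}: one must not undercount by ignoring variables on which some input of an $\lor$-gate fails to depend syntactically. This is guaranteed by completeness of the structured DNNF: for every gate $v$ the subcircuit $D_v$ is on exactly the variable set $\var(t_v)$, which follows inductively from the edge conditions in the definition, since the $\land$-gate children of a $\lor$-gate at a node $t$ decompose $\var(t)=\var(t_1)\cup\var(t_2)$ along the two children $t_1,t_2$ of $t$. Hence all inputs of a given $\lor$-gate range over the same variables and plain addition is valid. Beyond this observation, the argument is exactly the standard linear-time weighted-model-counting algorithm for d-DNNF, and this counting step is the only substantive part of the proof.
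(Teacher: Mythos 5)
Your proposal is correct and matches the paper's intended argument: the paper derives this corollary directly from Theorem~\ref{thm:itw} followed by the standard linear-time model count on a deterministic, decomposable circuit (cf.\ the ``Solve \#SAT on $D_t$'' step in Figure~\ref{fig:scheme}), leaving that counting step implicit. Your additional observation that completeness of the structured d-DNNF supplies the smoothness needed for the sums at $\lor$-gates is exactly the point that makes the implicit step go through.
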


We remark that Corollary~\ref{cor:countingitw} generalizes several results from the literature. On the one hand, it generalizes the main result of~\cite{Chen04} from decision to counting, from primal treewidth to incidence treewidth and gives more concrete runtime bounds\footnote{We remark that the latter two points have already been made recently in~\cite{LampisMM18}.}. On the other hand, it generalizes the counting result of~\cite{FichteMHW18} from projected model counting, i.e., QBF formulas free variables and just one existential quantifier block, to any constant number of quantifier alternations. Moreover, our runtime is linear in~$|F|$ in contrast to the runtime of~\cite{FichteMHW18} which is quadratic.

As a generalization of Theorem~\ref{thm:itw}, let us remark that there are compilation algorithms for graph measures beyond treewidth. For example, it is known that CNF formulas of \emph{bounded signed cliquewidth}~\cite{FischerMR08} can be compiled efficiently~\cite{BovaCMS15}. More exactly, there is an algorithm that compiles a CNF formula $F$ of signed incidence cliquewidth $k$ in time $2^{O(k)} |F|^2$ into a structured d-DNNF of size $2^{O(k)} |F|$. We will not formally introduce signed incidence cliquewidth here but refer the reader to~\cite{FischerMR08,BraultCM15}. Inspecting the proof of~\cite{BovaCMS15}, one can observe that the algorithm construct a complete structured d-DNNF of width at most $2^{O(k)}$ which as above yields the following result.

\begin{theorem}\label{thm:scw}
 There is an algorithm that, given a QBF $F$ with free variables, with $\ell$ quantifier blocks and of signed incidence cliquewidth $k$, computes in time $\exp^{\ell+1}(O(k))|F|+ 2^{O(k)} |F|^2$ a complete structured  d-DNNF of width $\exp^{\ell+1}(O(k))$ accepting exactly the models of $F$.
\end{theorem}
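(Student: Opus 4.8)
The plan is to follow the proof of Theorem~\ref{thm:itw} essentially verbatim, replacing the compilation step based on Corollary~\ref{cor:compileitw} by the compilation algorithm for bounded signed incidence cliquewidth from~\cite{BovaCMS15}. Write $F = Q_1 X_1 \dots \exists X_\ell\, G$, where $G$ is the quantifier-free matrix, a CNF of signed incidence cliquewidth $k$. The first step is to run the algorithm of~\cite{BovaCMS15} on $G$. As explained in the paragraph preceding the statement, inspecting that construction shows that it produces, in time $2^{O(k)}|F|^2$, a complete structured d-DNNF $D_0$ of width $w_0 = 2^{O(k)} = \exp^1(O(k))$ equivalent to $G$. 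In contrast to Corollary~\ref{cor:compileitw}, this compilation introduces no auxiliary variables, so $D_0 \equiv G$ directly and, unlike in Theorem~\ref{thm:itw}, no preliminary projection of fresh variables is needed before the quantifiers are handled.

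With $D_0$ in hand, I would apply Theorem~\ref{thm:mainprojection} iteratively, one quantifier block at a time, starting from the innermost existential block $X_\ell$ and proceeding outwards to $X_1$. Projecting $X_\ell$ turns $D_0$ (width $\exp^1(O(k))$) into a complete structured d-DNNF of width $2^{w_0} = \exp^2(O(k))$ that simultaneously exposes a gate for $\exists X_\ell\, G$ and a gate for $\neg \exists X_\ell\, G$. Keeping both a gate for the running formula and a gate for its negation is exactly what allows the iteration to cross alternating quantifiers, since $\forall X\, A \equiv \neg \exists X\, \neg A$: at each step the available negated gate is fed into the next existential projection. Processing all $\ell$ blocks thus amounts to $\ell$ applications of Theorem~\ref{thm:mainprojection}, each raising the tower height of the width by one, so starting from $\exp^1(O(k))$ the final width is $\exp^{\ell+1}(O(k))$, and the resulting circuit has a gate accepting exactly the models of $F$.

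For the running time, the compilation contributes the $2^{O(k)}|F|^2$ term. Each projection runs in time $2^{O(w)}|D|$ for the current width $w$ and size $|D|$; after each projection I would invoke Observation~\ref{obs:smallbags} to keep the circuit's size bounded by $O((w^2+w)|F|)$. Hence the step that raises the width from $\exp^j(O(k))$ to $\exp^{j+1}(O(k))$ costs $2^{O(\exp^j(O(k)))}\cdot\mathrm{poly}(\exp^j(O(k)))\cdot|F| = \exp^{j+1}(O(k))|F|$, the polynomial factor being absorbed into the top exponential. Summing over $j = 1, \dots, \ell$, the last projection dominates and the total projection time is $\exp^{\ell+1}(O(k))|F|$, which together with the compilation term yields the claimed bound $\exp^{\ell+1}(O(k))|F| + 2^{O(k)}|F|^2$.

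The only genuinely non-routine ingredient is the width guarantee of the compilation step: one must verify that the circuit produced by~\cite{BovaCMS15} is not merely of size $2^{O(k)}|F|$ but is \emph{complete} and \emph{structured} with width bounded by $2^{O(k)}$, so that Theorem~\ref{thm:mainprojection} applies with the stated parameters. This is precisely the claim asserted before the theorem, obtained by a careful reading of the construction in~\cite{BovaCMS15}, and I expect it to be the main obstacle. Everything downstream—the iterative projection, the use of $\forall X\, A \equiv \neg \exists X\, \neg A$, and the runtime bookkeeping—is an unchanged reuse of the machinery behind Theorem~\ref{thm:itw}.
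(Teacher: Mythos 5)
Your proposal is correct and follows exactly the route the paper intends: the paper gives no separate proof of this theorem, relying instead on the preceding remark that the compilation of \cite{BovaCMS15} yields a complete structured d-DNNF of width $2^{O(k)}$ in time $2^{O(k)}|F|^2$, after which the argument is the iterated application of Theorem~\ref{thm:mainprojection} as in the proof of Theorem~\ref{thm:itw}. Your identification of the width/completeness guarantee of the \cite{BovaCMS15} construction as the only non-routine ingredient, and your runtime bookkeeping (including the renormalization via Observation~\ref{obs:smallbags}), match the paper's reasoning.
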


With Theorem~\ref{thm:scw} it is now an easy exercise to derive generalizations of~\cite{Chen04,FichteMHW18,FischerMR08}.

In the light of the above positive results one may wonder if our approach can be pushed to more general graph width measures that have been studied for propositional satisfiability like for example modular treewidth~\cite{PaulusmaSS16} or (unsigned) cliquewidth~\cite{SlivovskyS13}. Using the results of~\cite{LampisM17}, we can answer this question negatively in two different ways: on the one hand, QBF of bounded modular cliquewidth and bounded incidence cliquewidth with one quantifier alternation is $\mathsf{NP}$-hard, so under standard assumptions there is no version of Theorem~\ref{thm:itw} and thus also not of Corollary~\ref{cor:compileitw} for cliquewidth. On the other hand, analyzing the proofs of~\cite{LampisM17}, one sees that in fact there it is shown that for every CNF formula $F$ there is a bounded modular treewidth and bounded incidence treewidth formula $F'$ and a set $Z$ of variables such that $F\equiv \exists Z F'$. Since it is known that there are CNF formulas that do not have subexponential size DNNFs~\cite{BovaCMS16}, it follows that there are such formulas $F'$ such that every DNNF representation of $\exists Z F'$ has exponential width. This unconditionally rules out a version of Corollary~\ref{cor:compileitw} and Theorem~\ref{thm:scw} for modular treewidth or cliquewidth.

\section{Transformations of bounded width d-DNNF}
\label{sec:transformations}

In this section, we systematically study the tractability of several transformations in the spirit of the knowledge compilation map of Darwiche and Marquis~\cite{DarwicheM02}. Given complete structured  d-DNNF $(D,T,\lambda)$ of width $w$  and $Z \subseteq \var(D)$, we will be interested in the following transformations:
\begin{itemize}
\item {\bf Conditioning (CD)}: given $\tau : Z \rightarrow \{0,1\}$, construct a complete structured  d-DNNF computing $D[\tau]$.
\item {\bf Forgetting (FO)}: construct a complete structured  d-DNNF computing $\exists Z~D$.
\item {\bf Negation ($\neg$)}: construct a complete structured  d-DNNF computing $\neg D$.
\item {\bf Bounded conjunction ($D \wedge D'$)}: given $D'$ a complete structured  d-DNNF with the same vtree $T$ as $D$, construct a d-DNNF computing $D \wedge D'$.
\item {\bf Conjunction ($\wedge$)}: given $D_1,\dots,D_n$ complete structured d-DNNF with the same vtree $T$ as $D$, construct a d-DNNF computing $D_1 \wedge \dots \wedge D_n$.
\item {\bf Bounded disjunction ($D \vee D'$)}: given $D'$ a complete structured d-DNNF with the same vtree $T$ as $D$, construct a d-DNNF computing $D \vee D'$.
\item {\bf Conjunction ($\vee$)}: given $D_1,\dots,D_n$ complete structured d-DNNF with the same vtree $T$ as $D$, construct a d-DNNF computing $D_1 \vee \dots \vee D_n$.
\end{itemize}
The tractability of these transformations is summarized in Table~\ref{tab:kcmap}.

\begin{table*}
  \centering
  \begin{tabular}{|c|c|c|}
    \hline
    \bf Transformation & \bf Width & \bf Proof \\ \hline
    \bf CD & $\leq w$ & Lemma~\ref{lem:removecstleaf} \\ \hline
    \bf FO & $\leq 2^w$ & Theorem~\ref{thm:mainprojection} \\ \hline
    \bf $\neg D$ & $\leq 2^w, \geq 2^{\Omega(w)}$ & Theorem~\ref{thm:mainprojection} applied with $Z = \emptyset$ and Theorem~\ref{thm:unboundedop} \\ \hline 
    \bf $D \wedge D'$ & $\leq ww'$ & \cite{PipatsrisawatD08} (see Appendix~\ref{thm:productwidth}, Theorem~\ref{thm:productwidth})\\ \hline
    \bf $\bigwedge_{i=1}^n D_i$ & Unbounded & Theorem~\ref{thm:unboundedop} \\ \hline
    \bf $D \vee D'$ & $\leq 2^{w+w'}$ & Theorem~\ref{thm:productwidth} \\ \hline
    \bf $\bigvee_{i=1}^n D_i$ & Unbounded & Theorem~\ref{thm:unboundedop}   \\ \hline
  \end{tabular}
  \caption{Transformations of bounded width d-DNNF}
  \label{tab:kcmap}
\end{table*}

The tractability of bounded conjunction for structured d-DNNF was proven in~\cite{PipatsrisawatD08} but the upper bound is the product of the size of the inputs and not the product of the width. The construction is a product of each gate which makes it easy to see that the width of the resulting circuit is the product of the widths of the inputs. For the convenience of the reader we sketch here the construction of the conjunction of two complete structured DNNF (the full proof is given in Appendix~\ref{app:productwidth}). 

\begin{theorem}
  \label{thm:productwidth} Let $T$ be a vtree, $(D,T,\lambda)$ and $(D',T,\lambda')$ be two complete structured d-DNNF of width $w$ and $w'$. There exists a complete structured d-DNNF $(D'',T,\lambda'')$ of width $ww'$ computing $D \wedge D'$.
\end{theorem}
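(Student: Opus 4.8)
The plan is to carry out the standard ``product'' (Apply-style) construction that conjoins two diagrams sharing the same skeleton, and to exploit the fact that $D$ and $D'$ are laid out along the \emph{same} vtree $T$ so that the product again respects $T$ and its width is exactly the product of the two widths. Concretely, I would build $D''$ bottom-up along $T$ so as to maintain the invariant that for every node $t$ and every pair $(u,u')$, where $u$ is an $\vee$-gate of $\lambda(t)$ and $u'$ an $\vee$-gate of $\lambda'(t)$, there is an $\vee$-gate $\langle u,u'\rangle$ in $\lambda''(t)$ computing $D_u\wedge D'_{u'}$. Since $\lambda(t)$ contains at most $w$ such $\vee$-gates and $\lambda'(t)$ at most $w'$, the node $t$ carries at most $ww'$ $\vee$-gates in $D''$, which gives the width bound; the gate at the root pair then computes $D\wedge D'$.

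For the base case, at a leaf $t$ labelled $x$ the relevant objects are the literal inputs over $x$, and for a pair $(p,p')$ of such inputs I take $\langle p,p'\rangle$ to be the literal $p$ when $p=p'$ and the constant $0$ when $p=\neg p'$. For the inductive step at an internal node $t$ with children $t_1,t_2$, write $u=\bigvee_i (b_i\wedge c_i)$ and $u'=\bigvee_j (b'_j\wedge c'_j)$, where the $\wedge$-gates connect to the $\vee$-gates $b_i,b'_j\in\lambda^{(\prime)}(t_1)$ and $c_i,c'_j\in\lambda^{(\prime)}(t_2)$. Distributing the conjunction yields
\[
  D_u\wedge D'_{u'}=\bigvee_{i,j}\bigl(D_{b_i}\wedge D'_{b'_j}\bigr)\wedge\bigl(D_{c_i}\wedge D'_{c'_j}\bigr),
\]
so I would define $\langle u,u'\rangle$ as the disjunction, over all $i,j$, of the $\wedge$-gate joining the already-built child gates $\langle b_i,b'_j\rangle$ at $t_1$ and $\langle c_i,c'_j\rangle$ at $t_2$. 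This produces exactly the required layering for a complete structured DNNF on $T$: a level of $\vee$-gates at $t$ feeding from a level of $\wedge$-gates at $t$ whose inputs are $\vee$-gates of the children. Decomposability is immediate, since $\langle b_i,b'_j\rangle$ is over $\var(t_1)$ and $\langle c_i,c'_j\rangle$ over $\var(t_2)$, which are disjoint; here the \emph{common} vtree is essential, as it guarantees that the $\wedge$-gates of both $D$ and $D'$ split $\var(t)$ into the same parts $\var(t_1),\var(t_2)$, making the regrouping above legitimate.

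The two verifications I would spell out are correctness and determinism. Correctness is an induction from the displayed identity together with the inductive invariant on $\langle b,b'\rangle$. For determinism I would argue that the disjunction defining $\langle u,u'\rangle$ stays deterministic: the term indexed by $(i,j)$ is satisfied exactly when the $i$-th disjunct of $u$ and the $j$-th disjunct of $u'$ are both satisfied, so two distinct indices $(i,j)\neq(i'',j'')$ cannot be simultaneously satisfied---if $i\neq i''$ this would force two disjuncts of the deterministic gate $u$ to hold, and symmetrically through $u'$ when $j\neq j''$. Thus the determinism of $D''$ reduces to that of $D$ and $D'$, and the width bound $ww'$ is untouched by these checks.

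The main obstacle, really the only one beyond bookkeeping, is the constant $0$ produced at the leaves when multiplying contradictory literals $p=\neg p'$, which would violate the leaf condition of a (non-extended) complete structured DNNF. I would handle this by pruning: any $\wedge$-gate one of whose factors is such a contradictory leaf product (or, recursively, an emptied-out child $\vee$-gate) contributes $0$ to its disjunction and can simply be omitted. This pruning cascades upward but only removes gates, so it preserves decomposability, determinism and the $ww'$ width bound; the only constant that can survive is a global $0$ when $D\wedge D'\equiv 0$, which is trivially representable. Equivalently, one may run the whole construction over the extended vtree allowing constant-labelled leaves and then clean up with Lemma~\ref{lem:removecstleaf}, obtaining the desired $(D'',T,\lambda'')$ of width $ww'$.
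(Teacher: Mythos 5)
Your proposal is correct and follows essentially the same route as the paper's proof in Appendix~\ref{app:productwidth}: a bottom-up product construction along the shared vtree, pairing gates of $D$ and $D'$ at each node, with the same decomposability, determinism and width arguments. The only difference is that you treat the constant-$0$ leaf products explicitly (by pruning or via Lemma~\ref{lem:removecstleaf}), a detail the paper glosses over but which does need the cleanup you describe.
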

\begin{proof}[Proof (sketch).]
The construction is by induction on $T$: for every $t$ and for every $u \in \lambda(t)$ and $u' \in \lambda(t')$ of the same type ($u$ and $u'$ are either both $\vee$-gates or $\wedge$-gates), we construct a gate $g_t(u,u')$ in $\lambda''(t)$ computing $D_u \wedge D_{u'}$. The construction is straightforward when $t$ is a leaf. If $t$ is an internal node with children $t_1,t_2$ we have two cases. If $u$ and $u'$ are $\wedge$-gates, we define $g_t(u,u') = g_{t_1}(u_1,u_1') \wedge  g_{t_2}(u_2,u_2')$ where $u_1,u_2$ are the children of $u$ and $u_1',u_2'$ the children of $u'$. If $u$ and $u'$ are $\vee$-gates, we define $g_t(u,u') = \bigvee_{i,j} g_t(u_i,u'_j)$ where $u_i$ are the children of $u$ and $u_i'$ the children of $u'$. The full proof is given in Appendix~\ref{app:productwidth}. 
\end{proof}

The following theorem proves the optimality of our result concerning the negation of complete structured DNNF and show that the width may blow up when one computes the unbounded conjunction or disjunction of small width complete structured DNNF.

\begin{theorem}
  \label{thm:unboundedop} For every $n$, there exist complete structured d-DNNF $D_1, \dots,$ $D_{n}$ on variables $X$ with $|X|=O(n)$ having the same vtree $T$ of width $2$ such that $\bigwedge_{i=1}^{n} D_i$ cannot be represented by complete structured DNNF of width smaller than $2^{\Omega(n)}$. Moreover  $\bigvee_{i=1}^{n} \neg D_i$ can be represented by complete structured DNNF of width $\Omega(n)$ but not less.
\end{theorem}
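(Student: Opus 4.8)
The plan is to exhibit a single Boolean function $\mathsf{EQ}$ together with a fixed vtree $T$ so that $\mathsf{EQ}$ is expensive for complete structured DNNF on $T$ while its negation $\mathsf{NEQ}=\neg\mathsf{EQ}$ is cheap, and to realize $\mathsf{EQ}$ as a conjunction of $n$ trivial factors. Concretely, I would take variables $X=\{x_1,\dots,x_n,y_1,\dots,y_n\}$, let $T$ be the vtree whose root has a left subtree (an arbitrary caterpillar) over $\{x_1,\dots,x_n\}$ and a right subtree over $\{y_1,\dots,y_n\}$, and set $D_i$ to be the complete structured d-DNNF on $T$ computing $x_i\leftrightarrow y_i$. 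Writing $x_i\leftrightarrow y_i=(x_i\wedge y_i)\vee(\neg x_i\wedge\neg y_i)$ and distributing the two literals into the two subtrees shows that $D_i$ carries at most two $\vee$-gates per node (one tracking $x_i=1,y_i=1$, one tracking $x_i=0,y_i=0$) and that its top $\vee$-gate is deterministic; hence each $D_i$ is a width-$2$ complete structured d-DNNF and $|X|=2n=O(n)$. With this choice $\bigwedge_i D_i=\mathsf{EQ}$, the function that is $1$ iff $x=y$, and $\bigvee_i\neg D_i=\neg\mathsf{EQ}=\mathsf{NEQ}$.

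The main tool for all three bounds is the standard rectangle-cover property of structured DNNF: for any complete structured DNNF computing $f$ on $T$ and the node $t$ that is the left child of the root (so $\var(t)=\{x_1,\dots,x_n\}$ and its complement is $\{y_1,\dots,y_n\}$), one has $f^{-1}(1)=\bigcup_g A_g\times B_g$ with one combinatorial rectangle per $\vee$-gate $g$ labelling $t$; in particular the number of rectangles is at most the width. I would invoke this property (as in~\cite{PipatsrisawatD08,BovaCMS15}) and then lower-bound the width by the minimum number of rectangles over the partition $(\{x\},\{y\})$ needed to cover the relevant satisfying set.

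For the conjunction I would observe that $\mathsf{EQ}^{-1}(1)$ is the diagonal $\{(a,a):a\in\{0,1\}^n\}$, and that any all-$1$ rectangle $A\times B$ contained in it forces $|A|=|B|=1$ with $A=B$; thus every rectangle covers a single assignment and at least $2^n$ of them are needed, giving width $\ge 2^n=2^{\Omega(n)}$ even against nondeterministic complete structured DNNF. For $\mathsf{NEQ}$ the upper bound $O(n)$ comes from a parallel composition: lay the $n$ width-$2$ DNNFs for $x_i\oplus y_i$ over the common vtree $T$, sharing leaves, and add a single root $\vee$-gate collecting their top $\wedge$-gates; at every node the $\vee$-gates are the union of those of the $n$ pieces, so the width is at most $2n=O(n)$. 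For the matching lower bound $\Omega(n)$ I would identify rectangle covers of $\mathsf{NEQ}$ over $(\{x\},\{y\})$ with biclique covers of the complete graph $K_{2^n}$ (an all-$1$ rectangle $A\times B$ corresponds to disjoint $A,B$, since the $0$-entries are exactly the diagonal) and use the classical fact that covering the edges of $K_m$ by bicliques requires $\lceil\log_2 m\rceil$ of them; with $m=2^n$ this yields $n$ rectangles, hence width $\ge n=\Omega(n)$. This lower bound, like the upper bound, is necessarily stated with respect to the fixed vtree $T$, since on a vtree pairing $x_i$ with $y_i$ both $\mathsf{EQ}$ and $\mathsf{NEQ}$ collapse to width $2$.

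I expect the main obstacle to be the clean statement and application of the rectangle-cover lemma, in particular ensuring that the context of each $\vee$-gate at $t$ is a genuine function of the complementary variables so that one really gets at most (width) rectangles, together with the biclique-cover lower bound $\lceil\log_2 m\rceil$, whose short proof (injectivity of the map sending a vertex to its set of $A$-side memberships) I would include to keep the argument self-contained.
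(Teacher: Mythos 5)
Your construction is internally coherent and the rectangle/biclique counting is correct, but it proves a strictly weaker statement than the theorem. Both of your lower bounds are, as you yourself note, ``necessarily stated with respect to the fixed vtree $T$'': on a vtree pairing $x_i$ with $y_i$, both $\mathsf{EQ}$ and $\mathsf{NEQ}$ have width $O(1)$. The theorem, however, asserts that $\bigwedge_i D_i$ \emph{cannot be represented by complete structured DNNF} of width smaller than $2^{\Omega(n)}$ --- with no restriction on the vtree of the representation --- and likewise that $\bigvee_i \neg D_i$ needs width $\Omega(n)$. This unrestricted reading is the one the paper needs: the point of Table~\ref{tab:kcmap} is that no width-bounded algorithm for unbounded $\wedge$ or $\vee$ exists, and such an algorithm would be free to output a circuit over a different vtree. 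Since your witness function genuinely becomes easy under a re-chosen vtree, the gap is not a missing lemma but a wrong choice of hard instance; the $\mathsf{EQ}$ approach cannot be repaired to give the stated theorem.

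The paper avoids this by taking the $D_i$ to be the clauses (negations of terms, each of width $2$ on \emph{any} vtree) of a CNF $F$ from~\cite{BovaCMS16} that requires size $2^{\Omega(n)}$ for \emph{every} DNNF, structured or not. The width lower bound for $\bigwedge_i D_i$ then follows from Observation~\ref{obs:smallbags}, which converts any width-$w$ complete structured DNNF into one of size $O(w^2\,|\var(D)|)$, and the $\Omega(n)$ lower bound for $\bigvee_i \neg D_i = \neg F$ follows by applying Theorem~\ref{thm:mainprojection} with $Z=\emptyset$ (negation costs at most one exponential in width, so a width-$o(n)$ representation of $\neg F$ would yield a width-$2^{o(n)}$ representation of $F$). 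Your rectangle-cover and biclique-cover arguments are nice, elementary, and self-contained --- and would be a perfectly good proof of a vtree-relative variant --- but to prove the theorem as stated you would have to replace $\mathsf{EQ}$ by a function that is hard for all vtrees, which is exactly the role the unconditional lower bound of~\cite{BovaCMS16} plays in the paper's proof.
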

\begin{proof}
  Let $C$ be a conjunction of literals on $X$. It is easy to see that for any vtree $T$ on $X$, $C$ can be computed by a complete structured d-DNNF with vtree $T$ of width $1$. Indeed, $C$ is equivalent to $C \wedge \bigwedge_{x \notin \var(C)} x \vee \neg x$ and one can reorder this decomposable conjunction and use associativity to mimic $T$. It is easy to see that the clause $\neg C$ can be computed by a complete structured d-DNNF with vtree $T$ of width $2$ (either by constructing it explicitly or by applying Theorem~\ref{thm:mainprojection}).

  Let $F = \bigwedge_{i=1}^{n} D_i$ be the CNF formula from~\cite{BovaCMS16} on variables $X$ that cannot be represented by DNNF of size smaller than $2^{\Omega(n)}$. From what precedes, $F$ is the unbounded conjunction of width $2$ d-DNNF that cannot be represented by complete structured DNNF of width smaller than $2^{\Omega(n)}$.

  Now, assume that $\neg F = \bigvee_{i=1}^n \neg D_i$ can be represented by a width $w$ complete structured DNNF. Then by Theorem~\ref{thm:mainprojection}, $F$ is computed by a complete structured DNNF of width $2^w \geq 2^{\Omega(n)}$ from what precedes, that is $w \geq \Omega(n)$. Moreover, observe that $F$ is a DNF with $n$ terms. It can thus be easily computed by a complete structured DNNF of width $n$ by having a $\vee$-gate on top of all its terms, represented by width $1$ DNNF as described in the beginning of this proof, resulting in a DNNF of width $n$.
\end{proof}

\section{Lower Bounds}
\label{sec:lower-bounds}

In this section, we will show that all restrictions we put onto the DNNF in Theorem~\ref{thm:mainprojection} are
necessary. 

\subsection{The definition of width}

Width of an OBDD is usually defined on \emph{complete} OBDD\footnote{We remark that this is similar for more general representations like structured DNNF, but we will not follow this direction here.}. There is however another way of defining width for OBDD by just counting the number of nodes that are labeled with the same variable. Let us call this notion \emph{weak width}. We will show that width in Theorem~\ref{thm:mainprojection} cannot be substituted by weak width.

\begin{lemma}
 For every $n$ there is an OBDD $D_n$ in $O(n)$ variables of weak width $3$ and a subset $Z$ of such that $\neg \exists Z~D_n$ does not have an OBDD of size $2^{o(n)}$.
\end{lemma}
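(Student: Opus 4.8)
The statement says: there is an OBDD $D_n$ of weak width $3$ in $O(n)$ variables and a set $Z$ such that $\neg\exists Z\,D_n$ requires OBDDs of size $2^{\Omega(n)}$. The contrast with Lemma~\ref{lem:OBDD} is the point: if the width (complete width) were bounded, forgetting would only cost an exponential in the width. So I need a function whose OBDD has small \emph{weak} width but large \emph{complete} width, and whose universal projection is genuinely hard for OBDDs.

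The plan is to exhibit a concrete function on variables partitioned into two halves $X=\{x_1,\dots,x_n\}$ and $Y=\{y_1,\dots,y_n\}$, together with auxiliary variables $Z$, engineered so that $\neg\exists Z\,D_n$ encodes a hard "comparison/disjointness"-type function such as the inequality function $\mathrm{NEQ}(X,Y)=\bigvee_i (x_i\neq y_i)$ or the set-disjointness/equality predicate, which are the classic OBDD lower-bound examples. Such functions have OBDD size $2^{\Omega(n)}$ under \emph{any} variable order because reading one block before the other forces remembering a linear number of bits (a standard Nechiporuk/communication-complexity argument). So the lower bound half will follow by a black-box appeal to the communication-complexity method: fix any order $\pi$; a balanced cut of $\pi$ splits $X\cup Y$ so that the two parts must agree on $\Omega(n)$ positions, forcing $2^{\Omega(n)}$ nodes crossing the cut.

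The construction of $D_n$ itself is the delicate part and the real content. I would introduce for each index $i$ a single guard/selector variable $z_i\in Z$ and arrange the OBDD so that along any single source-sink path only a constant number ($\le 3$) of nodes ever carry the same label — this is what keeps the \emph{weak} width at $3$ — while the \emph{complete} OBDD (in which every variable is read on every path) would have exponentially many nodes per layer. The trick is that weak width counts distinct nodes per variable, so branching structure that "reuses" nodes across different partial assignments, or that simply skips variables on many paths, keeps weak width tiny even though completing the OBDD re-expands it; this is exactly the phenomenon flagged in the preliminaries via~\cite{BolligW00}. Concretely I would make each $z_i$ act as a switch selecting between $x_i$ and $y_i$, chained so that existentially projecting $Z$ realizes the intended hard predicate on $X\cup Y$, while the diagram reading order keeps each variable's node count bounded.

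The main obstacle is reconciling the two requirements simultaneously: I must keep weak width at the constant $3$ (so that substituting weak width for width in Theorem~\ref{thm:mainprojection} would wrongly predict a tame blow-up), yet guarantee that after projecting $Z$ and negating, the resulting function is provably hard for OBDDs of \emph{every} order. Getting weak width exactly constant while still encoding a linear-information predicate requires the skipping/reuse structure to be laid out so that no variable is the label of more than three nodes; verifying this bound and verifying that $\exists Z\,D_n$ (hence its negation) equals the target hard function are the two checks I expect to dominate the write-up. The lower bound for the target function itself I would cite or reprove in one paragraph via the standard cut argument rather than grinding through it.
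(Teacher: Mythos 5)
There is a genuine gap, and it is in both halves of your plan. First, the lower bound: the target functions you name (inequality $\mathrm{NEQ}(X,Y)=\bigvee_i(x_i\neq y_i)$, equality, set-disjointness) are \emph{not} hard for OBDDs under every variable order. Under the interleaved order $x_1,y_1,x_2,y_2,\dots$ each of them has an OBDD of constant width and linear size, because the diagram only needs to remember one bit of state per pair. The ``balanced cut'' communication argument you invoke is a \emph{fixed-partition} argument; an OBDD lower bound must hold in the best-partition model, where the adversary (the OBDD) chooses the order, and for these functions the best partition makes the communication trivial. So the black-box appeal you plan to make would simply be false, and this sinks the lower-bound half. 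You would need a function that is hard in the best-partition sense; the paper sidesteps this entirely by arranging for $\neg\exists Z\,D_n$ to equal a CNF formula $F$ known to have no subexponential-size DNNF at all (citing the DNNF lower bounds of Bova et al.), which subsumes every OBDD order.

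Second, the construction of $D_n$ is not actually given, and the one gadget you do sketch does not work: if $z_i$ ``selects between $x_i$ and $y_i$'' in each coordinate, then $\exists Z\,D_n$ is a conjunction or disjunction of independent per-coordinate conditions such as $\bigwedge_i(x_i\vee y_i)$, and both it and its negation have trivial OBDDs. The missing idea is a \emph{unary selector over clauses}: take a hard CNF $F=C_1\wedge\dots\wedge C_m$ in which every variable occurs in at most three clauses, set $S_i=\neg z_i\wedge\bigwedge_{j<i}z_j$, and let $F'=\bigvee_{i=1}^m S_i\wedge C_i$. Then $\forall Z\,F'\equiv F$ (each assignment to $Z$ isolates one clause, so the universal quantification recovers the conjunction), i.e.\ $\neg\exists Z\,(\neg F')\equiv F$, and one takes $D_n=\neg F'$. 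The weak width bound of $3$ then has a concrete reason --- a variable $x$ is tested only inside the sub-OBDDs of the at most three clauses containing it --- rather than the generic ``node reuse/skipping'' heuristic you describe. Your instinct that weak width must be controlled by skipping variables on most paths is correct, but without a construction whose projection provably equals a function that is hard for \emph{all} orders, the proof does not go through.
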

\begin{proof}
 Let $S_i$ for $i\in \mathbb{N}$ denote the term $ \neg z_i \wedge \big(\bigwedge_{j\in [i-1]} z_j\big )$. For a CNF $F= C_1\land \ldots \land C_m$ we then define the function 
 \[F' = \bigvee_{i=1}^m S_i \land C_i.\]
 It is easy to see that by testing $z_1,\dots,z_m$ successively and branching a small OBDD for $C_i$ at each $0$-output of the decision node testing $z_i$ as depicted on Figure~\ref{fig:wwobdd}, one can construct an OBDD of size $O(|F|)$ computing $F'$. If every variable appears in at most three clauses of $F$, then this OBDD has weak width $3$ since a variable $x$ is only tested for clauses where it appears.

 \begin{figure}
   \centering
   \begin{tikzpicture}[node distance=1.5cm]
     \node (z1) {$z_1$};
     \node[right of=z1] (z2) {$z_2$};
     \node[right of=z2] (dots) {$\dots$};
     \node[right of=dots] (zm) {$z_m$};

     \draw[->] (z1) -- (z2);
     \draw[->] (z2) -- (dots);
     \draw[->] (dots) -- (zm);

     \node[above of=z1] (C1) {$C_1$};
     \node[above of=z2] (C2) {$C_2$};
     \node[above of=zm] (Cm) {$C_m$};

     \draw[dashed, ->] (z1) -- (C1);
     \draw[dashed, ->] (z2) -- (C2);
     \draw[dashed, ->] (zm) -- (Cm);

   \end{tikzpicture}
   \caption{An OBDD for $F'$}
   \label{fig:wwobdd}
 \end{figure}
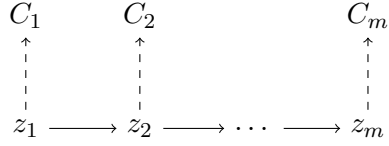
 
 Note that $\forall Z~F' \equiv F$. Since there are CNF formulas of the desired type that do not have subexponential size DNNF, it follows that for such $F$ the function $\forall Z~F'$ has exponential size. Now remarking that $\forall Z~F'\equiv \neg \exists Z~(\neg F')$ and that $\neg F'$ has an OBDD of weak width $3$ as well, completes the proof. % todo: at some point, cite something for this lower bound.
\end{proof}

\subsection{Structuredness}

One of the properties required for Theorem~\ref{thm:mainprojection} is that we need the input to be structured. Since structuredness is quite restrictive, see e.g.~\cite{PipatsrisawatD10}, it would be preferable to get rid of it to show similar results. Unfortunately, there is no such result as the following lemma show.

To formulate our results, we need a definition of width for FBDD. This is because width as we have defined it before depends on the vtree of the DNNF which we do not have in the case without structuredness. To define width for the unstructured case, we consider layered FBDD: an FBDD $F$ is called \emph{layered} if the nodes of $F$ can be partitioned into sets $L_1, \ldots, L_s$ such that for every edge $uv$ in $F$ there is an $i\in [s]$ such that $u\in L_i$ and $v\in L_{i+1}$. The \emph{width} of $F$ is then defined as $\max\{|L_i| \mid i\in [s]\}$.

\begin{lemma}
 For every $n$ there is a function $f_n$ in $O(n^2)$ variables with an FBDD representation of size $O(n^2)$ and width $O(1)$ such that there is a variable $x$ of $f_n$ such that every deterministic DNNF for $\exists x~f_n$ has size $2^{\Omega(n)}$.
\end{lemma}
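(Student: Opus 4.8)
The plan is to first reinterpret the statement so that it becomes a separation between DNNF and \emph{deterministic} DNNF. Since $f_n$ is an FBDD, it is in particular a DNNF of size $O(n^2)$, and it is a standard fact of the knowledge compilation map~\cite{DarwicheM02} that DNNF supports forgetting in linear time (replace both literals over the forgotten variable by the constant true; decomposability makes this sound). Hence $\exists x\, f_n = f_n[x{=}0] \vee f_n[x{=}1]$ always has a \emph{nondeterministic} DNNF of size $O(n^2)$, and the content of the lemma is therefore \emph{not} a DNNF lower bound but the claim that this one projected disjunction requires exponential size once determinism is imposed. This is exactly the point where the cheap determinism granted by structuredness in Theorem~\ref{thm:mainprojection} must fail: the example has to force the determinism of the top-level $\vee$ to be expensive.

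\textbf{Construction.} I would exploit the single feature FBDDs have over OBDDs, namely that distinct source--sink paths may read variables in different orders, and use $x$ as a router between two incompatible orders. Concretely, place the $O(n^2)$ variables on an $n\times n$ grid, build one read-once width-$O(1)$ program $B_0$ that scans the grid row by row and a second one $B_1$ that scans it column by column, and let $f_n$ test $x$ at the source, branching to $B_0$ on $x{=}0$ and to $B_1$ on $x{=}1$. Along every path each grid variable is read exactly once, so $f_n$ is a genuine FBDD; since each $B_b$ carries only $O(1)$ state between consecutive layers, the whole diagram is layered of width $O(1)$ and size $O(n^2)$. By design $\exists x\, f_n = g_0 \vee g_1$, where $g_b$ is the function of $B_b$, so the two branches yield a cover of the $1$-set of $\exists x\, f_n$ by two ``ordered'' families, which is why its DNNF size is small, whereas any \emph{deterministic} representation is forced to \emph{partition} this $1$-set.

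\textbf{Lower bound.} To obtain the $2^{\Omega(n)}$ bound on deterministic DNNF I would use the communication-complexity characterisation of compilation size in the spirit of~\cite{BovaCMS16}: the size of a DNNF is bounded below by the rectangle-\emph{cover} number of the communication matrix under a balanced variable partition, while the size of a \emph{deterministic} DNNF is bounded below by the number of rectangles in a \emph{disjoint} (unambiguous) cover of the same matrix. I would choose the grid functions $g_0,g_1$, using the expander/incidence structure underlying the hard instances of~\cite{BovaCMS16}, so that for every balanced partition of the grid the $1$-entries of $\exists x\, f_n$ are covered by $O(1)$ rectangles (the two orderings) but every \emph{partition} into rectangles needs $2^{\Omega(n)}$ of them; translating this partition lower bound back through the characterisation gives the stated deterministic-DNNF lower bound.

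\textbf{Main obstacle.} The easy half --- constant width, linear size, and the $O(1)$ cover --- is routine bookkeeping once the grid functions are fixed. The real difficulty is the unambiguous side of the lower bound: proving that every rectangle \emph{partition} of the $1$-matrix of $\exists x\, f_n$ is exponentially large, and, crucially, that this holds for \emph{every} balanced variable partition rather than a single fixed one. I expect to get this by reducing the existence of a small partition to a small unambiguous (or deterministic) communication protocol for a function whose nondeterministic complexity is $O(1)$ but whose unambiguous complexity is $\Omega(n)$ across all balanced partitions, driven by the expansion of the underlying graph exactly as in the DNNF lower bounds of~\cite{BovaCMS16}; checking that this gap survives both the projection of $x$ and the passage between the two incompatible reading orders is the delicate step.
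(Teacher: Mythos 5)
Your construction is exactly the one the paper uses: a router variable $x$ at the source that selects between a row-order and a column-order width-$O(1)$ read-once program over an $n\times n$ grid, so that $\exists x\, f_n = g_0\vee g_1$ with each $g_b$ order-friendly. Your reframing (forgetting is free for nondeterministic DNNF, so the lemma is really about the cost of determinism) is also correct. The gap is that you never exhibit concrete $g_0,g_1$ and never establish the $2^{\Omega(n)}$ deterministic-DNNF lower bound for $g_0\vee g_1$: the entire lower-bound section is a plan (``I would choose\dots'', ``I expect to get this by\dots'') for an unambiguous rectangle-partition argument over all balanced partitions, and you yourself flag this as the delicate, unresolved step. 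Since everything else is, as you say, routine bookkeeping, the lemma is not actually proved by the proposal.

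The fix is a citation rather than a new argument. Take Sauerhoff's function: let $g$ be $1$ iff the sum of its $n$ inputs is divisible by $3$, set $R_n(X)=\bigoplus_{i=1}^n g(x_{i1},\dots,x_{in})$, $C_n(X)=R_n(X^T)$, and $S_n=R_n\vee C_n$. Then $R_n$ and $C_n$ have width-$O(1)$, size-$O(n^2)$ OBDDs in row order and column order respectively, $f_n=(x\wedge R_n)\vee(\neg x\wedge C_n)$ is a layered FBDD of size $O(n^2)$ and width $O(1)$, $\exists x\, f_n=S_n$, and the $2^{\Omega(n)}$ lower bound for deterministic DNNF computing $S_n$ is already known~\cite{BovaCMS16}. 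This is exactly the paper's proof; re-deriving the unambiguous-cover lower bound from scratch for bespoke grid functions, as you propose, is unnecessary and would be considerably harder than the remainder of the argument.
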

\begin{proof}
 We use a function introduced by Sauerhoff~\cite{Sauerhoff03}: let $g:\{0,1\}^n\rightarrow \{0,1\}$ be the function that evaluates to $1$ if and only if the sum of its inputs is divisible by $3$. For a $n\times n$-matrix $X$ with inputs $x_{ij} \in \{0,1\}$, we define \[R_n(X) := \bigoplus_{i=1}^n g(x_{i1}, x_{i2}, \ldots, x_{in})\] where $\oplus$ denotes addition modulo $2$ and define $C_n(X) := R_n(X^T)$ where $X^T$ is the transpose of $X$. Then $S_n(X) := R_n(X) \lor C_n(X)$.
 
 Note that, ordering the variables of $X$ by rows, resp.~columns, $R_n$ and $C_n$ both have OBDD of width $O(1)$ and size $O(n^2)$. Now let $S_n' = (x \land R_n) \lor (\neg x \land C_n)$. Then $S_n'$ clearly has an FBDD of size $O(n^2)$ and width $O(1)$: decide on $x$ first and then depending on its value follow the OBDD for $R_n$ or $C_n$.
 
 But $\exists x S_n'(X) = S_n(X)$ which completes the proof since $S_n$ is known to require size $2^{\Omega(n)}$ for deterministic DNNF~\cite{BovaCMS16}.
\end{proof}

% Note that if we do not insist on determinism of the output, then existential quantification of variables is easy as already observed in~\cite{Darwiche01} and the construction maintains structuredness~\cite{PipatsrisawatD08}. However, if we aim to treat alternating quantifier blocks as we can with Theorem~\ref{thm:mainprojection}, then then we get exponential lower bounds. To this end, 

% 
% \begin{lemma}
%  
% \end{lemma}

% \section{conclusion}

% \begin{itemize}
%  \item what we did
%  \item non-monotone reasoning as in~\cite{LampisMM18}
% \end{itemize}

\bibliographystyle{alpha}
\bibliography{qbfcompile}

\appendix

\section{Compiling bounded primal treewidth CNF}\label{app:treewidthlinear}
\label{sec:compiling}

A tree decomposition $(T, (B_t)_{t\in T})$ of a graph $G$ consists of a tree $T$ and a set of \emph{bags} $B_t\subseteq V(G)$ such that for every node $t$ of the tree there is exactly one bag and the following properties hold: (i) for every edge $e\in E(G)$, there is a node $t$ of $T$ such that $e\subseteq B_t$; (ii) for every vertex $v\in V(G)$, the set $\{t\in V(T)\mid v\in B_t\}$ induces a subtree of $T$. The \emph{width} of a decomposition is $\max\{|B_t| -1\mid t\in V(T)\}$ and the \emph{treewidth} of $G$ is the smallest width of a tree decomposition of $G$.
The \emph{primal treewidth} of a CNF formula is the treewidth of its primal graph.

This section is dedicated to the proof of the following theorem:
\begin{theorem}
  \label{thm:ptwcompilelin}
  There is an algorithm that, given a CNF formula $F$ of primal treewidth $k$, constructs in time $2^{O(k)}|F|$ a complete structured decision DNNF of size $2^{O(k)}|F|$ equivalent to $F$.
\end{theorem}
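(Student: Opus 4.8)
The plan is to carry out the classical dynamic program over a tree decomposition, but to emit at each step gates of a complete structured d-DNNF rather than mere satisfiability information. First I would compute, in time $2^{O(k)}|F|$, a tree decomposition $(T,(B_t)_t)$ of the primal graph $\gprim$ of width $O(k)$ using a known single-exponential linear-time approximation algorithm for treewidth, and normalize it into a nice tree decomposition with leaf, introduce, forget and join nodes of total linear size. Since the variables of any clause $C$ form a clique in $\gprim$ and every clique is contained in some bag of any tree decomposition, I can then assign each clause to a single node $t_C$ whose bag contains $\var(C)$, in time $2^{O(k)}|F|$.

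The second step is to read a vtree $T'$ off the nice decomposition by placing the leaf for each variable $x$ at its unique forget node, which exists and is unique because the bags containing $x$ induce a connected subtree of $T$. With this placement the variables appearing below an internal vtree node are exactly those that are forgotten inside the corresponding subtree, while every boundary variable of a bag $B_t$ has its leaf strictly above $t$. Consequently the variable partition induced by $T'$ at every node coincides with the separator structure of the tree decomposition, and decomposability of all $\wedge$-gates will come for free; the values of boundary variables will instead be carried by a dynamic-programming index rather than by duplicated leaves.

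For the circuit I would build, for every node $t$ and every assignment $a\colon B_t\to\{0,1\}$, a $\vee$-gate $g_t(a)$ computing, over the variables forgotten below $t$, the function $F_t$ — the conjunction of all clauses $C$ with $t_C$ at or below $t$ — restricted to the assignments whose restriction to $B_t$ equals $a$. At a leaf or an introduce node this amounts to evaluating the (constantly many) newly available clauses against $a$ and relabelling the child gate. At a forget node removing $x$, whose vtree node has the leaf of $x$ as a child, I set $g_t(a) = (\neg x \wedge g_{t'}(a\cup\{x\mapsto 0\})) \vee (x \wedge g_{t'}(a\cup\{x\mapsto 1\}))$, a deterministic $\vee$-gate whose disjuncts disagree on $x$. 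At a join node with children $t_1,t_2$ sharing the bag, I set $g_t(a) = g_{t_1}(a)\wedge g_{t_2}(a)$, which is decomposable by the previous paragraph. The gate $\bigvee_a g_r(a)$ at the root then computes $F$; since there are at most $2^{|B_t|}\le 2^{O(k)}$ gates per node, the result has width $2^{O(k)}$ and size $2^{O(k)}|F|$, and Lemma~\ref{lem:removecstleaf} removes the auxiliary constants without increasing the width.

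The hard part will be forcing the dynamic program into the rigid template of a complete structured DNNF from Section~\ref{sec:defnorm}: strictly alternating $\vee$- and $\wedge$-levels, the two inputs of each $\wedge$-gate lying in distinct vtree children, a single variable per leaf, and completeness of every root-to-sink path. The unary introduce and forget steps of the nice decomposition, which have only one child, are the main source of friction, and I would absorb them using the dummy fan-in-one gates sanctioned by the discussion around Observation~\ref{obs:smallbags}, taking care that clause evaluation against the index $a$ never reintroduces a boundary variable as a leaf on the wrong side of a join. Once the format is respected, the width bound $2^{O(k)}$ and the linear dependence on $|F|$ follow immediately from the bound $2^{O(k)}$ on the number of distinct bag assignments.
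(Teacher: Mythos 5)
Your proposal follows essentially the same route as the paper's own proof: a nice tree decomposition of width $O(k)$, a bottom-up dynamic program producing one $\vee$-gate per node $t$ and per assignment $a\colon B_t\to\{0,1\}$ computing $F_t[a]$, a vtree obtained from the decomposition by hanging each variable's leaf at its unique forget node, and Lemma~\ref{lem:removecstleaf} to dispose of the auxiliary constant leaves. The only points you gloss over are ones the paper handles explicitly, namely the harmless normalization that the root bag is empty (needed so that the final gate computes $F$ itself rather than a projection) and the linear-time computation of the clause-to-node assignment $t_C$, which the paper flags as not completely obvious and implements with a dedicated doubly-linked-list structure.
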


In the remainder, we fix a formula $F$ and compute the DNNF in several steps.

\subsection{Linear time computation of a nice tree decomposition}

To simplify the proof, we will work with \emph{nice} tree decompositions~\cite{Kloks94}: a tree decomposition $(T, (B_t)_{t\in T})$ of a graph $G$ is called \emph{nice} if all internal nodes $t$ of $T$ are of one of the following types:
\begin{itemize}
 \item \textbf{Introduce node:} $t$ has a single child $t'$ and there is a vertex $v\in V(G)\setminus B_{t'}$ such that $B_t = B_{t'} \cup \{v\}$.
 \item \textbf{Forget node:} $t$ has a single child $t'$ and there is a vertex $v\in B_{t'}$ such that $B_t = B_{t'}\setminus \{v\}$.
 \item \textbf{Join node:} $t$ has exactly two children $t_1$ and $t_2$ and we have $B_{t} = B_{t_1} = B_{t_2}$.
\end{itemize}

The first step of our algorithm is to compute a nice tree decomposition of width $O(k)$ for the primal graph of $F$ in time $2^{O(k)}|F|$. To this end, we use the algorithm of~\cite{bodlaender2016c} which, given a graph $G$ of treewidth $k$, computes in time $2^{O(k)}|G|$ a tree decomposition of $G$ of width at most $5\cdot k$. Note that the primal graph of $F$ has size at most $k |F|$, so the runtime for the computation of the tree decomposition is linear in $|F|$. This tree decomposition is then turned into a nice tree decomposition in linear time and without increasing the treewidth by standard techniques~\cite{Kloks94}. Denote the resulting nice tree decomposition of the primal graph of $F$ by $(T, (X_t)_{t\in T})$. W.l.o.g.~we assume that for all leaves $t$ we have $X_t= \emptyset$.

\subsection{Constructing the decision DNNF}
\label{sec:construction}

We start by describing how we construct the decision DNNF equivalent to~$F$. We first introduce some notation. Let $r$ be the root of $T$. We assume w.l.o.g.~that $X_r= \emptyset$. Let $T_t$ denote for every node $t$ of $T$ the subtree of $T$ rooted in $t$. For every clause $C$ of $F$, the variables in $C$ form a clique in the primal graph and thus there is a node $t$ in $T$ such that $\var(C) \subseteq X_t$. We denote by $t_C$ the node  $t$ of $T$ that is closest to the root $r$ such that $\var(C) \subseteq X_t$. Given a node $t$ of $T$, we denote by $\calC_t = \{C \in F \mid t_C = t\}$ and by $F_t = \bigcup_{u \in T_t} \calC_u$. Observe that $F_r = F$. 

Our construction proceeds by bottom-up dynamic programming on $T$ from the leaves to the root $r$. We construct a decision DNNF $D$ such that for every node $t$ of $T$ and $\tau : X_t \rightarrow \{0,1\}$, there exists a gate $v_t^\tau$ in $D$ computing $F_t[\tau]$. Observe that this is enough to prove Theorem~\ref{thm:ptwcompilelin} since we assume that $X_r = \emptyset$ and thus, there exists a gate in $D$ computing $F_r = F$. 

$D$ will be a complete structured d-DNNF for the labeled extended vtree $(T',\lambda)$ defined as follows: $T'$ has the same node as $T$ plus for every variable $x$, the only forget node on variable $x$ is connected to an extra leaf $t_x$ labeled by $x$ and every introduce node is connected to an unlabeled leaf. One can apply Lemma~\ref{lem:removecstleaf} to eliminate constants and have a regular vtree. In the following, we will always identify the vertices of $T$ with their corresponding vertices in $T'$. We assume by induction that $v_t^\tau$ is either a $\vee$-gate or an input that is in $\lambda(t)$.
Let $t$ first be a leaf. Since we assumed $X_{t_0} = \emptyset$, we have that $F_{t_0}$ is the empty CNF-formula and thus is by definition equivalent to the constant $1$. Thus, we add a gate $v_{t_0}^\emptyset := 1$.  We add $v_{t_0}^\emptyset$ to $\lambda(t_0)$. This obviously respects the condition on complete structured DNNF since $t_0$ is an unlabeled leaf of $T$.

\paragraph{$t$ is join node.} Let $t_1,t_2$ be the children of $t$. Observe that we have already constructed gates for $t_1$ and $t_2$ in $D$. Now, by definition, we have $X_t = X_{t_1} = X_{t_2}$ and $F_t = F_{t_1} \uplus F_{t_2} \uplus \calC_t$. Let $\tau \colon X_t \rightarrow \{0,1\}$. We start by evaluating $\calC_t$ on $\tau$. Recall that if $C \in \calC_t$, we have $\var(C) \subseteq X_t$ so $\tau$ assigns all variables that appear in $\calC_t$. Thus $\calC_t[\tau]$ is a constant. If $\calC_t[\tau] = 0$, then we add a new gate $v_t^\tau := 0$. Otherwise, $F_t[\tau]$ is equivalent to $F_{t_1}[\tau] \wedge F_{t_2}[\tau]$. We thus introduce a gate fan-in one $\vee$-gate $v_t^\tau$ and connect it to a $\wedge$-gate connected  to $v_{t_1}^\tau$ and $v_{t_2}^\tau$\footnote{The fan-in one $\vee$-gate is only necessary to respect the normal form of complete structured d-DNNF.}. By induction, $v_t^\tau$ computes $F_t[\tau]$. We still have to show that this new $\wedge$-gate is decomposable. This follows by the definition of tree decompositions: if $x$ is a variable that appears both in $F_{t_1}$ and $F_{t_2}$, then $x$ has to appear in $X_{t_1'}$ and $X_{t_2'}$ for $t_1'\in T_{t_1}$ and $t_2\in T_{t_2}$ and thus we get $x\in X_t$. It follows that $x$ is assigned a value by $\tau$ and so $x$ does not appear in the subcircuits rooted in $v_{t_1}^\tau$ and $v_{t_2}^\tau$.

We add both $v_t^\tau$ and the newly introduced $\wedge$-gate into $\lambda(t)$. By induction, $v_{t_1}^\tau$ and $v_{t_2}^\tau$ are $\vee$-gates or inputs in $\lambda(t_1)$ and $\lambda(t_2)$ respectively. The construction of $v_t^\tau$ thus respects the condition of complete structured DNNF.

\paragraph{$t$ is an introduce node.} Let $t_1$ be the child of $t$ and $x$ be the introduced variable. By definition, $X_t = X_{t_1} \cup \{x\}$ and $F_t = F_{t_1} \cup \calC_t$. Let $\tau \colon X_t \rightarrow \{0,1\}$ and let $\tau_1 := \tau|_{X_{t_1}}$. As in the previous case, we evaluate $\calC_t$ on $\tau$. If $\calC_t[\tau] = 0$, we proceed as before. Otherwise, $F_t[\tau] = F_{t_1}[\tau_1]$ and thus, we already have the gate $v_{t_1}^{\tau_1}$ that computes $F_t[\tau]$ and we let $v_t^\tau$ to be a fan in one $\vee$-gate connected to a fan in one $\wedge$-gates connected to $v_{t_1}^{\tau_1}$. We add all newly introduced gate to $\lambda(t)$. Since by induction, $v_{t_1}^{\tau_1}$ is in $\lambda(t_1)$, the construction of $v_t^\tau$ respects the condition of complete structured DNNF.

\paragraph{$t$ is a forget node.} Let $t_1$ be the child of $t$ and $x$ be the eliminated variable. By definition, $X_t = X_{t_1} \setminus \{x\}$ and $F_t = F_{t_1} \cup \calC_t$. Let $\tau \colon X_t \rightarrow \{0,1\}$ and let $\tau_0 := \tau \cup \{x \mapsto 0\}$ and $\tau_1 := \tau \cup \{x \mapsto 1\}$. As in the previous case, we evaluate $\calC_t$ on $\tau$. If $\calC_t[\tau] = 0$, we proceed as before. Otherwise we have that $F_t[\tau]$ is equivalent to $(x \wedge F_{t_1}[\tau_1]) \vee (\neg x \wedge F_{t_1}[\tau_0])$. We thus introduce a gate $v_t^\tau$ that is a decision node on variable $x$. We connect this gadget to $v_{t_1}^{\tau_1}$ and $v_{t_1}^{\tau_0}$ in the obvious way. By induction, $v_t^\tau$  computes $F_t[\tau]$.

Observe that $v_t^\tau$ is an $\vee$-gate connected to two $\wedge$-gate $w_1,w_2$. We put all of these gates in $\lambda(t)$ and the newly introduced input $x$ and $\neg x$ in $\lambda(t_x)$, where $t_x$ is the extra $x$-labeled leaf of $T'$. Since by induction $v_{t_1}^\tau$ is in $\lambda(t_1)$ for any $\tau$, the construction of $v_t^\tau$ respects the condition of complete structured DNNF.

\paragraph{Complexity of constructing $D$.} We now justify that this construction can be done in time $2^{O(k)}|F|$. For now, we assume that $\calC_t$ has been precomputed for every $t$. Computing these sets in linear time is not completely obvious and is thus done in the next section. The first thing that we need to do is to compute the order in which the nodes in $T$ are treated in the construction. This can be easily done in time $O(|T|) = O(|F|)$ by doing a depth-first search of $T$, starting from the root $r$. 

Now, we justify that the gates $v_t^\tau$ can be computed in time $2^{O(k)} |F|$ overall.
Observe that each case of the construction always boils down to the following steps: evaluate $\calC_t$ on $\tau : X_t \rightarrow \{0,1\}$ and branch it to existing gates in the circuit.

First, observe that every clause $C$ of $F$ is in exactly one $\calC_t$ so we have to evaluate $C$ at most $2^{O(k)}$ times in the algorithm. So the overall evaluation time for all $\calC_t$ is $2^{O(k)}|F|$.

Now, we show that we can compute the newly introduced gates from the existing ones in constant time. Observe that when we introduce $v_t^\tau$, either we label it with a constant, which can obviously be done in constant time or we connect it to at most two gates of the form $v_u^{\sigma}$ for $u$ a child of $t$ and $\sigma \colon X_u \rightarrow \{0,1\}$. To do this in constant time, we associate to each node $u$ of $T$ an array $A_u$ of size at most $2^{O(k)}$. Each entry of $A_u$ corresponds to an assignment $\sigma \colon X_u \rightarrow \{0,1\}$ (ordered by lexicographical order) and contains a pointer to the gate $v_u^\sigma$ in $D$. Now, when we create $v^\tau_t$, we create the gate in time $O(1)$ and insert a pointer to it in $A_t[\tau]$ in time $O(1)$. Then, we find the pointers to the appropriate gate by accessing in $O(1)$ time $A_u[\sigma]$ for $u$ a child of $t$.

\paragraph{Width of $D$.} It is easy to see that in our construction, the only $\vee$-gates contained in $\lambda(t)$ are the gates of the form $v_t^\tau$ for every node $t$ of $T$. Thus, the width of $D$ is at most $2^{O(k)}$.

\subsection{Computation of $\calC_t$ in linear time}

The last thing that we have to explain is how we compute $\calC_t$ in linear time. This is done using a bottom-up induction on $T$ and an appropriate data structure.

We start by observing that if $\calC_t \neq \emptyset$ then the father $u$ of $t$ in $T$ is a forget node. Indeed, if $u$ is not a forget node, then we have $X_u \supseteq X_t$. So if $\var(C) \subseteq X_t$, we also have $\var(C) \subseteq X_u$, and $u$ is closer to $r$ than $t$, we have $C \notin \calC_t$.

Now we claim that if $u$ is a forget node for variable $x$ then $\calC_t$ is exactly the set of clauses $C$ such that $x \in \var(C)$  and for every $t' \in T_t$ with $t'\ne t$, we have $C \notin \calC_{t'}$. To see this, first assume that $C\in \calC_t$. Then $C$ must contain $x$ with the same argument as above. Moreover, $C \notin \calC_{t'}$ for any node $t'\ne t$ since $t_C$ is uniquely defined. For the other direction, assume that $x \in \var(C)$  and for every $t' \in T_t$ with $t'\ne t$, we have $C \notin \calC_{t'}$. Since $x \notin X_u$, there is no $v$ outside of $T_t$ such that $x \in X_v$. So $t_C$ must lie in $T_t$. Moreover, since $C \notin \calC_{t'}$ for $t'$ for any $t'\in T$ with $t \ne t'$, we get $t_C = t$ or equivalently $C\in \calC_t$.

In order to compute $\calC_t$ for every $t$, we can thus compute them for every child of a forget node along a post-order depth-first traversal of $T$. When we computes $\calC_t$ for a node $t$ whose father is a forget $x$ node, we only have to add to $\calC_t$ every clause that contains $x$ and that we still have not added to a $\calC_{t_1}$ set, for $t_1 \prec t$. This can be done in linear time assuming a data structure $\mathcal{D}$ that allows us to remove from $\mathcal{D}$ every clause containing a variable $x$ in time $O(\sum_{C \in D \colon x \in \var(C)} |C|)$. Indeed, since every clause is deleted exactly once in this algorithm, the overall complexity of the algorithm would be $\sum_{C \in F} O(|C|) = O(|F|)$.

We now describe this data structure. For convenience, we assume that the variables of $F$ are $x_1, \dots, x_n$ and the clauses of $F$ are $C_1,\dots, C_m$. We start by initiating the following data structure: we have an array $A_V$ with $n$ entries and an array $A_F$ of size $m$. Each entry of $A_V$ is a doubly linked list. For $i \leq n$, $A_V[i]$ contains a doubly linked list of the clauses containing variable $x_i$. Moreover, $A_F$ contains for every clause a doubly linked list containing two way pointers between $A_F[j]$ and every occurrence of a clause $C_j$ in $A_V$. Figure~\ref{fig:datastructure} depicts the data structure for $(x_1 \vee x_2) \wedge (x_1 \vee x_3) \wedge (x_2 \vee x_3)$.

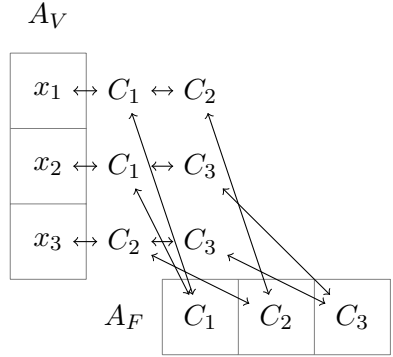
\begin{figure}
  \centering

  \begin{tikzpicture}
\draw[step=1cm,gray,very thin] (0,0) grid (1,-3);
\node (x1) at (0.5,-0.5) {$x_1$};
\node[below of=x1] (x2) {$x_2$};
\node[below of=x2] (x3) {$x_3$};

\node[right of=x1] (C1x1) {$C_1$};
\node[right of=C1x1] (C2x1) {$C_2$};

\node[right of=x2] (C1x2) {$C_1$};
\node[right of=C1x2] (C3x2) {$C_3$};

\node[right of=x3] (C2x3) {$C_2$};
\node[right of=C2x3] (C3x3) {$C_3$};

\draw[step=1cm,gray,very thin] (2,-3) grid (5,-4);

\node (C1) at (2.5,-3.5) {$C_1$};
\node[right of=C1] (C2) {$C_2$};
\node[right of=C2] (C3) {$C_3$};

\node[above of=x1] (AV) {$A_V$};
\node[left of=C1] (AF) {$A_F$};

\draw[<->] (x1) -- (C1x1); 
\draw[<->] (C1x1) -- (C2x1);

\draw[<->] (x2) -- (C1x2); 
\draw[<->] (C1x2) -- (C3x2);

\draw[<->] (x3) -- (C2x3); 
\draw[<->] (C2x3) -- (C3x3);

\draw[<->] (C1x1) -- (C1);
\draw[<->] (C1x2) -- (C1);
\draw[<->] (C2x1) -- (C2);
\draw[<->] (C2x3) -- (C2);
\draw[<->] (C3x2) -- (C3);
\draw[<->] (C3x3) -- (C3);

\end{tikzpicture}
  
  \caption{The data structure}
  \label{fig:datastructure}
\end{figure}

It is easy to see that such a data structure can be constructed in time $O(|F|)$ by simply reading the clauses of $F$ one after the other. When $x_i$ is seen in a clause $C_j$, we add $C_j$ in $A_V[i]$ together with a double link to the list in $A_F[j]$ and move to the next variable in $C_j$. 

The key property of this data structure is that given a variable $x_i$, we can remove every clause containing $x_i$ in the data structure in time $\sum_C |C|$ where the sum goes over these clauses. Indeed, we start by removing the first clause $C_j$ in the doubly linked list $A_V[i]$. This can be done in time $O(|C_j|)$ since we have to remove $|C_j|$ occurrences of $C_j$ and each of them can be found in $O(1)$ by reading $A_F[j]$ back and forth. We can apply this for every clause in $A_v[i]$ until the list is empty.

\section{Transformation of structured d-DNNF}
\label{app:productwidth}

This section is dedicated to the proof of Theorem~\ref{thm:productwidth}.

The construction is by induction on $T$. For every $t$, we construct in $\lambda''(t)$  a gate $g_t(u,u')$ for every $u \in \lambda(t)$ and $u' \in \lambda'(t)$ computing $D_u \wedge D_{u'}$, where $u$ and $u'$ are of the same type (input, $\wedge$-gate or $\vee$-gate).

Observe that $D''$ will be of width $ww'$ and if $r$ is the root of $T$, $u$ the output of $D$ and $u'$ the output of $D'$, then the gate $g_r(u,u')$ computes $D \wedge D'$.

We construct the gates $g_t(u,u')$ by induction on $t$. If $t$ is a leaf labeled by $x$ then $u$ and $u'$ are both inputs and $D_u \wedge D_{u'}$ is equivalent to either a constant, $x$ or $\neg x$.

Now let $t$ be a node of $T$ with children $t_1,t_2$ and let $u \in \lambda(t)$ and $u' \in \lambda'(t)$. First, assume that $u$ and $u'$ are $\wedge$-gates of $D$ and $D'$. Let $u_1 \in \lambda(t_1), u_2 \in \lambda(t_2)$ be the inputs of $u$ and $u'_1 \in \lambda'(t_1), u'_2 \in \lambda'(t_2)$ be the inputs of $u'$. We want $g_t(u,u')$ to compute $(u_1 \wedge u_2) \wedge (u'_1 \wedge u'_2)$. By associativity and commutativity, this is equivalent to $(u_1 \wedge u'_1) \wedge (u_2 \wedge u'_2)$ and by induction $(u_1 \wedge u'_1)$ is computed by gate $g_{t_1}(u_1,u_1')$ and $(u_2 \wedge u'_2)$ is computed by gate $g_{t_2}(u_2,u_2')$. Thus, we define $g_t(u,u')$ to be a decomposable $\wedge$-gate with input $g_{t_1}(u_1,u_1')$ and $g_{t_2}(u_2,u_2')$.

Now, assume that $u$ and $u'$ are $\vee$-gates of $D$ and $D'$. Let $u_1,\dots,u_k \in \lambda(t)$ be the inputs of $u$ and $u'_1,\dots,u'_p \in \lambda'(t)$ be the inputs of $u'$. We want $g_t(u,u')$ to compute $\bigvee_{i=1}^k u_i \wedge \bigvee_{j=1}^p u'_j = \bigvee_{i,j} u_i \wedge u'_j = \bigvee_{i,j} g_t(u_i,u'_j)$. Thus, we define $g_t(u,u')$ to be $\bigvee_{i,j} g_t(u_i,u'_j)$. It computes what we need but we have to check that the disjunction is deterministic. Assume that this is not the case, that is, there are $(i,j) \neq (k,l)$ such that $g_t(u_i,u'_j)$ and $g_t(u_k,u'_l)$ have a common satisfying assignment $\tau$. Then by definition, $\tau$ satisfies $(u_i \wedge u'_j) \wedge (u_k \wedge u'_l)$. We assume wlog that $i \neq k$. Then $\tau$ satisfies $u_i \vee u_k$ which contradicts the fact that $u$ is deterministic. 

\end{document}